%% file: main.tex
\documentclass[10pt, conference, letterpaper]{IEEEtran}
\IEEEoverridecommandlockouts
\usepackage{tikz}
\usepackage{filecontents}

\usepackage{amsmath,amsfonts,amssymb}
\usepackage{amsthm}
\usepackage{algorithmic}
\usepackage{array}
\usepackage[caption=false,font=normalsize,labelfont=sf,textfont=sf]{subfig}
\usepackage{textcomp}
\usepackage{stfloats}
\usepackage{url}
\usepackage{verbatim}
\usepackage{graphicx}
\usepackage{xcolor}
\usepackage{booktabs}
\usepackage{multirow} 
\usepackage[marginal]{footmisc}
\usepackage{balance}
\usepackage{xspace}
\usepackage{enumitem}
\usepackage[table]{xcolor}

\AtBeginDocument{%
  \providecommand\BibTeX{{%
    Bib\TeX}}}

\def\BibTeX{{\rm B\kern-.05em{\sc i\kern-.025em b}\kern-.08em
    T\kern-.1667em\lower.7ex\hbox{E}\kern-.125emX}}

\newcommand{\sysname}{RadioMaster\xspace}

\theoremstyle{plain}
\newtheorem{theorem}{Theorem}
\newtheorem{lemma}{Lemma}

\begin{document}

\title{From Intent to Air: Multi-Agent Autonomous \\ Radio Signal Generation}

\author{Jiazhen Lei, Yuxin Sha, Tianze Cao, Sihan Wang, Bingbing Wang, Zeming Yang, Fengyuan Zhu, Xiaohua Tian}

\maketitle

\begin{abstract}
Translating user intent into physical radio signals is the last critical step in wireless prototyping. It chains protocol planning, baseband synthesis, and hardware configuration. Large language models and multi-agent systems have reshaped software engineering, raising the question of whether they can solve this problem. Yet current models fail at this task, even when augmented with domain tools. Because the stages run sequentially, an error at any stage propagates downstream, so the end-to-end success rate collapses toward zero even when each stage looks locally competent. We introduce RadioMaster, a fully autonomous multi-agent framework that drives user input to verified emissions transmitted over the air. It rests on three synergistic pillars. RadioWiki grounds generation in domain knowledge to suppress hallucination. RadioAgent decomposes the fragile pipeline into independently executable and locally recoverable stages. RadioEmulator gates deployment behind closed-loop physical-layer verification. We further build RadioBench, the first benchmark for autonomous radio signal generation. Extensive real-world evaluations show that RadioMaster substantially outperforms state-of-the-art baselines in configuration viability and signal fidelity, while reducing configuration time by up to 28$\times$.
\end{abstract}

\begin{IEEEkeywords}
Multi-Agent System, Radio Signal Generation, Autonomous System
\end{IEEEkeywords}

\input{1_Introduction}
\input{2_Related_Work}
\input{3_Observation}
\input{4_Design}

\input{5_Experiments}
\input{6_Conclusion}


\bibliographystyle{IEEEtran}
\balance
\bibliography{reference}

\end{document}

%% file: 1_Introduction.tex
\section{Introduction}
Radio signal generation represents the critical ``last mile'' for validating wireless communication paradigms in the physical world. Software-Defined Radio (SDR) serves as the cornerstone technology for this phase, offering unprecedented prototyping flexibility by decoupling complex signal processing from dedicated hardware~\cite{tinysdr, lei2026enablingagileambientiot}. In practice, engineers must translate high-level goals, derived from dense protocol standards and system requirements, into executable baseband algorithms and precise hardware configurations. This translation demands substantial multidisciplinary expertise~\cite{sdr1, sdr2}.

As illustrated at the top of Fig.~\ref{fig:teaser}, the traditional manual workflow for this translation is tedious and error-prone. Operators must parse thick protocol specifications, translate them into digital baseband algorithms, and then write intricate code to configure SDR platforms. These stages are tightly chained. Every stage must be correct for the final emission to succeed, yet an error at any single stage silently propagates and compounds downstream. Success is thus judged not by a plausible script or a passing simulation, but by a waveform physically emitted over the air and correctly decoded by a real receiver. This disjointed process forms a severe bottleneck. It raises the barrier to entry and slows prototyping, experimentation, and standard-compliance validation in wireless systems~\cite{wang2025bridging, cheng2025embodied}.

\begin{figure}[t]
    \centering
    \includegraphics[width=0.99\linewidth]{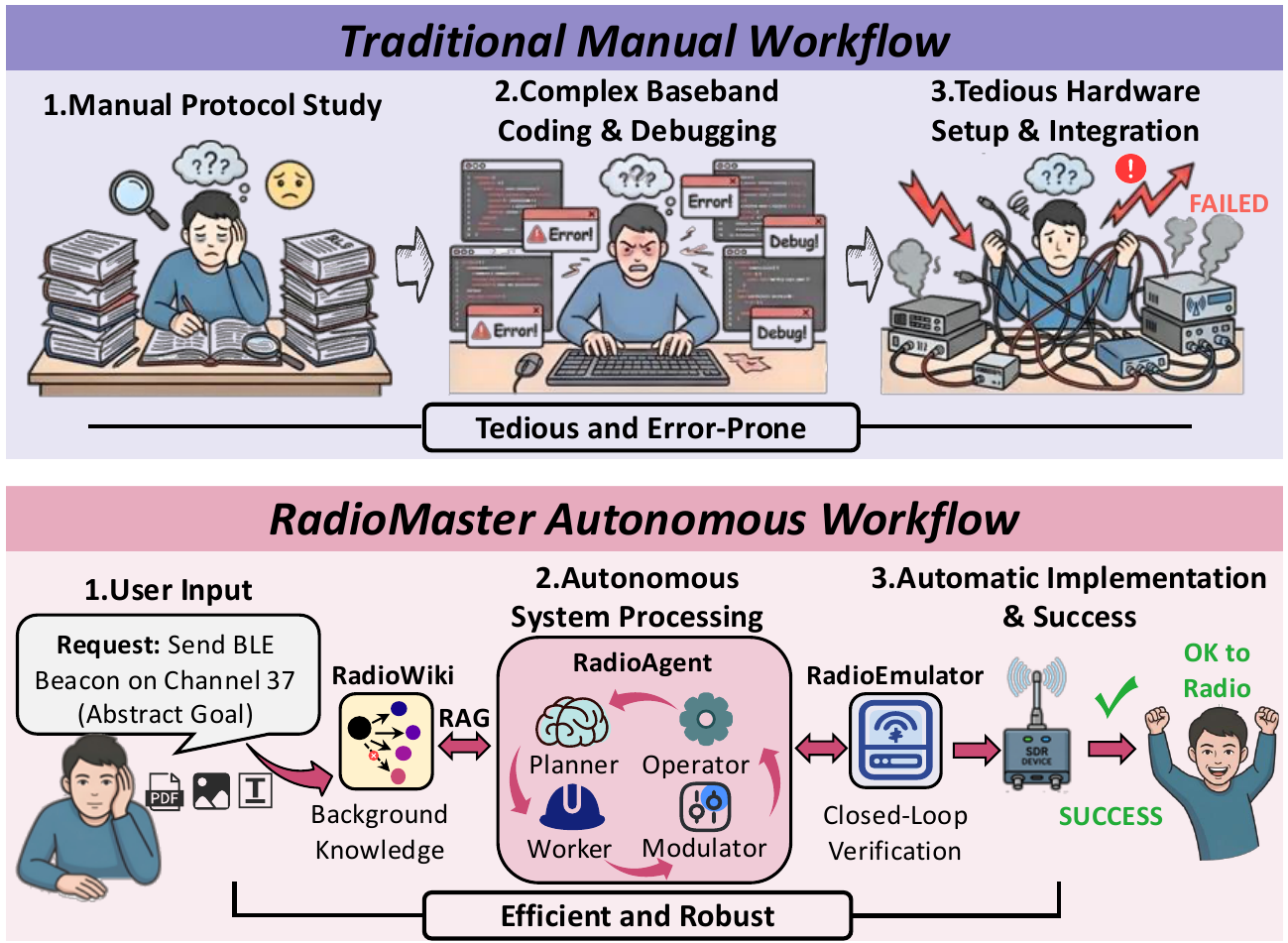}
    \caption{Radio signal generation via the traditional manual workflow versus \sysname's autonomous multi-agent framework.}
    \label{fig:teaser}
    \vspace{-5mm}
\end{figure}

Large language models (LLMs)~\cite{Claude, GPT, Qwen, DeepSeek} and multi-agent systems (MAS)~\cite{AutoIoT, IoTPilot, zhao2026agentic, ma2026autorf, luo2026emerging} have reshaped high-level software engineering and generic automation. Current frontier foundation models, such as GPT-5.5~\cite{GPT}, Claude-Opus-4.8~\cite{Claude}, Qwen3.7-Max~\cite{Qwen}, and DeepSeek-V4-Pro~\cite{DeepSeek}, alongside agentic scaffolds like Claude Code~\cite{ClaudeCode}, excel at processing general semantic logic and writing standard computational code. This agentic paradigm has even begun to penetrate specialized engineering domains. For instance, CLI agents such as Claude Code~\cite{ClaudeCode} and Codex~\cite{Codex} can now be equipped with the MATLAB Agentic Toolkit~\cite{MatlabToolkit}. This grants them programmatic access to the Communications, WLAN, and Bluetooth Toolboxes, letting them synthesize, execute, and iteratively repair signal-processing scripts against the MATLAB runtime. Such progress raises a compelling question. Can these systems resolve the difficulties inherent in end-to-end radio signal generation?

Our empirical evaluations reveal profound limitations when these generic models are applied directly to radio signal generation~\cite{xu2024penetrative, liu2025tasksense, zhao2025flexifly}. They suffer near-total failure across the complete lifecycle. Even domain-tool-augmented CLI agents, though able to iterate against a simulator, stay confined to software-level artifacts and seldom survive the crossing from simulation to over-the-air execution. The cause is structural rather than incidental. Radio signal generation is a tightly coupled chain of protocol planning, baseband synthesis, and hardware configuration. Because the stages compose sequentially, an error at any single stage propagates downstream. The end-to-end success rate is thus squeezed stage by stage and collapses towards zero, even when each stage appears locally competent. This fragility is aggravated by three failure modes that generic models exhibit along the chain: (i) \textbf{semantic misinterpretation of intricate protocols}, as models fail to maintain the tight parameter coupling and rigid formatting demanded by complex standards; (ii) \textbf{hallucination of internal APIs and functions}, where models generate syntactically plausible but nonexistent library calls; and (iii) \textbf{insensitivity to physical hardware constraints}, which yields naive configurations that fail to map digital parameters to actual transmission capabilities.

To overcome these bottlenecks, we introduce \textbf{\sysname}, a fully autonomous multi-agent framework that closes the intent-to-air execution loop and drives user input to verified radio signals emitted over the air. Since the failure is a chain that collapses stage by stage, the remedy must attack that collapse from three complementary angles. \sysname grounds every stage in verified domain knowledge, decomposes the fragile monolithic pipeline into independently executable and locally retriable stages, and gates deployment behind closed-loop physical verification. As depicted in Fig.~\ref{fig:teaser}, it instantiates these angles through three synergistic pillars. First, \textbf{RadioWiki} uses an adaptive-routing Retrieval-Augmented Generation (RAG) mechanism over a domain-specific knowledge base to ground generation and alleviate hallucination. Second, \textbf{RadioAgent} orchestrates a Planner, Worker, Modulator, and Operator that decompose the fragile pipeline into independently executable and locally recoverable stages, performing baseband processing and synthesizing precise execution scripts. Third, \textbf{RadioEmulator} runs closed-loop verification in a virtual communication environment, validating signal integrity and protocol compliance so that only reliable configurations reach the hardware. Together, these pillars carry an intent to a real over-the-air emission validated by commercial receivers, rather than stopping at code or a passive simulation.

To enable standardized quantitative evaluation for this emerging domain, we further construct \textbf{RadioBench}, the first benchmark covering autonomous radio signal generation end to end, built from expert-curated tasks under a unified and fair evaluation protocol. Unlike prior efforts that terminate at software-level code or simulator outputs, \sysname closes the intent-to-air loop and is validated on real-world hardware testbeds. Our results show that \sysname substantially outperforms existing baselines in configuration viability and physical signal fidelity.

Our contributions in this paper are as follows:
\begin{itemize}
    \item We propose \sysname, the first autonomous multi-agent framework to close the full intent-to-air execution loop, driving user intent to over-the-air emissions rather than merely generating code or simulations.
    \item We design a system architecture featuring RadioWiki for hallucination-suppressed knowledge grounding, a collaborative RadioAgent for stage-wise execution and recovery, and a simulation-based RadioEmulator for closed-loop verification, alongside RadioBench, the first expert-curated domain-specific benchmark.
    \item We conduct extensive evaluations on real-world hardware testbeds against state-of-the-art baselines, showing that \sysname generates high-fidelity, executable radio signals where generic models fail. It configures a task up to $28\times$ faster than unaided human experts and up to $7\times$ faster than Codex-assisted experts. By turning abstract intent into verified over-the-air emissions, this work lays a foundation for AI-assisted rapid prototyping and accelerated standardization in next-generation wireless systems.
\end{itemize}

\textbf{Open-source plan. } Our code, dataset and benchmark will be publicly released.

%% file: 2_Related_Work.tex
\section{Related Work}

\subsection{Radio Signal Generation}
Traditional efforts to streamline radio signal generation have yielded specialized frameworks like MATLAB toolboxes~\cite{moler2020history}, the GNU Radio ecosystem~\cite{blossom2004gnu}, and the USRP Hardware Driver (UHD)~\cite{UHD}. Recently, AI-assisted methods have been leveraged to enhance isolated stages of this pipeline, such as protocol comprehension~\cite{huang2025chat3gpp, maatouk2024tele}, baseband synthesis~\cite{thakur2024verigen, wang2024nn}, and signal recognition~\cite{zou2026rf, chen2025radiollm, tong2026wirelessagentautomatedagenticworkflow}. However, these AI-assisted systems predominantly focus on fragmented sub-tasks, largely neglecting the holistic generation lifecycle. Consequently, a profound scalability gap persists between these isolated works, preventing unified and autonomous real-world deployment.

\subsection{Multi-Agent System}
Recent research highlights a paradigm shift from monolithic LLMs to MAS for tackling complex challenges. Such systems have demonstrated remarkable efficacy in modeling autonomous interactions across domains such as LLM collaboration~\cite{guo2024large, li2024survey, yao2022react, xu2023rewoo}, embodied AI~\cite{guo2026embodied, tan2020multi, wu2025generative}, scientific problem-solving~\cite{zhao2026agentic, peng2025tree, ghafarollahi2025sciagents, fan2025ai}, automated hardware and embedded synthesis~\cite{ma2026autorf, luo2026emerging, yang2026autoembed}, and network operation~\cite{wang2026ppai, kwon2026open}. Despite these widespread successes in computational environments, extending MAS to the physical wireless domain remains significantly underexplored~\cite{xu2024penetrative}. 
Specifically, autonomously translating abstract user intents into executable physical radio emissions via multi-agent orchestration persists as a critical, unresolved challenge.

%% file: 3_Observation.tex
\section{Problem Formulation \& Motivation}

\subsection{Problem Formulation}\label{section:3.1}
A conventional intent-to-air task unfolds in three stages, namely protocol planning, baseband synthesis, and hardware configuration. We accordingly model the pipeline as a composite operator $\Phi=\Phi_{3}\circ\Phi_{2}\circ\Phi_{1}$, where $\Phi_{1}$, $\Phi_{2}$, and $\Phi_{3}$ realize the three stages. 
The pipeline yields a configuration and an emitted signal, whose outcome is judged at three increasingly strict levels with intrinsic success probabilities: an executable configuration ($\rho_{C}$), a hardware-deployable configuration ($\rho_{H}$), and an integrity-preserving emitted signal ($\rho_{S}$).

\textbf{Multiplicative Collapse.} In this pipeline, we let $p_{k}\in(0,1)$ denote the probability that stage $k$ is generated correctly.
Because a faithful emission requires all $K=3$ stages to be correct and the stages compose sequentially, the end-to-end success probability of this pipeline is
\begin{equation}
S_{\star}=\prod_{k=1}^{K}p_{k}.
\label{eq:collapse}
\end{equation}
Here $S_{\star}$ is precisely the signal-integrity rate $\rho_{S}$ attained by this monolithic pipeline, since a faithful end-to-end emission is exactly an integrity-preserving emitted signal. This multiplicative structure exposes a fundamental limitation: even when each stage is individually reliable, the product $S_{\star}$ shrinks multiplicatively across the $K=3$ stages and collapses toward zero once any single $p_{k}$ becomes small. Separately, the three success levels themselves obey a structural ordering, as formalized below.
\begin{lemma}[Metric funnel]\label{lem:funnel}
The three success probabilities satisfy $\rho_{S}\le\rho_{H}\le\rho_{C}$.
\end{lemma}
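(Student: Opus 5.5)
The plan is to read the three success levels as probabilities of nested events on one probability space and then use monotonicity of probability. Fix a task together with the randomness of the pipeline $\Phi=\Phi_{3}\circ\Phi_{2}\circ\Phi_{1}$, so that a single run yields a configuration and, when possible, an emitted signal. Define three events on this space: $E_{C}$, that the produced configuration is executable; $E_{H}$, that it is hardware-deployable; and $E_{S}$, that the emitted signal preserves integrity. By the definitions in Section~\ref{section:3.1}, $\rho_{C}=\Pr[E_{C}]$, $\rho_{H}=\Pr[E_{H}]$, and $\rho_{S}=\Pr[E_{S}]$.

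The key step is to establish the inclusions $E_{S}\subseteq E_{H}\subseteq E_{C}$ from the semantics of the levels, which the paper describes as ``increasingly strict.'' I would argue each inclusion separately.

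\textbf{$E_{H}\subseteq E_{C}$.} A configuration is deployed to hardware by running it. So a hardware-deployable configuration must first execute without error, and any outcome in $E_{H}$ therefore also lies in $E_{C}$.

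\textbf{$E_{S}\subseteq E_{H}$.} An integrity-preserving signal is, by definition, a signal physically emitted by the hardware under the produced configuration. No signal exists unless the configuration was successfully deployed, so every outcome in $E_{S}$ lies in $E_{H}$.

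Given these inclusions, monotonicity of probability measures, $A\subseteq B\Rightarrow\Pr[A]\le\Pr[B]$, gives $\rho_{S}\le\rho_{H}\le\rho_{C}$ immediately. As a consistency check, I would note that $\rho_{S}=S_{\star}$ from~\eqref{eq:collapse} then sits at the bottom of this funnel. I would also note that the funnel is a property of the metrics and does not depend on the multiplicative structure.

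The main obstacle is that the lemma is only as rigorous as the definitions of the three levels, which the paper gives informally. The argument therefore hinges on fixing the convention that each stricter check is evaluated only on runs that passed the weaker one, or equivalently that failing an earlier check counts as failure at every later level. If a signal-integrity check could somehow be scored on a run whose configuration never executed (for example, through a simulator fallback), the inclusion would fail. I would therefore state this convention explicitly before invoking monotonicity.
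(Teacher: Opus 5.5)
Your proposal is correct and matches the paper's proof: the paper likewise argues that signal integrity presupposes deployment, which presupposes executability, so the success events are nested as $\mathcal{E}_{S}\subseteq\mathcal{E}_{H}\subseteq\mathcal{E}_{C}$, and then applies monotonicity of probability. The convention you state explicitly, that failing an earlier level counts as failure at every later level, is the same assumption the paper makes implicitly through the word ``presupposes.''
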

\begin{proof}
Signal integrity presupposes a successfully deployed configuration, which in turn presupposes an executable configuration; the corresponding success events are therefore nested, $\mathcal{E}_{S}\subseteq\mathcal{E}_{H}\subseteq\mathcal{E}_{C}$, and the monotonicity of probability yields the claim.
\end{proof}

In current practice, maximizing each $p_{k}$ falls to human experts, who manually study protocol specifications and proceed through trial-and-error debugging to drive every stage toward reliability. Such a workflow is slow, costly, and difficult to scale. Meanwhile, LLMs and MAS are rapidly reshaping software engineering and general-purpose automation, raising the prospect of an autonomous workflow that carries a task end to end, from intent to air, without human intervention. Yet whether LLMs and MAS can truly realize this vision remains an open question.

\begin{figure}[t]
    \centering
    \includegraphics[width=0.99\linewidth]{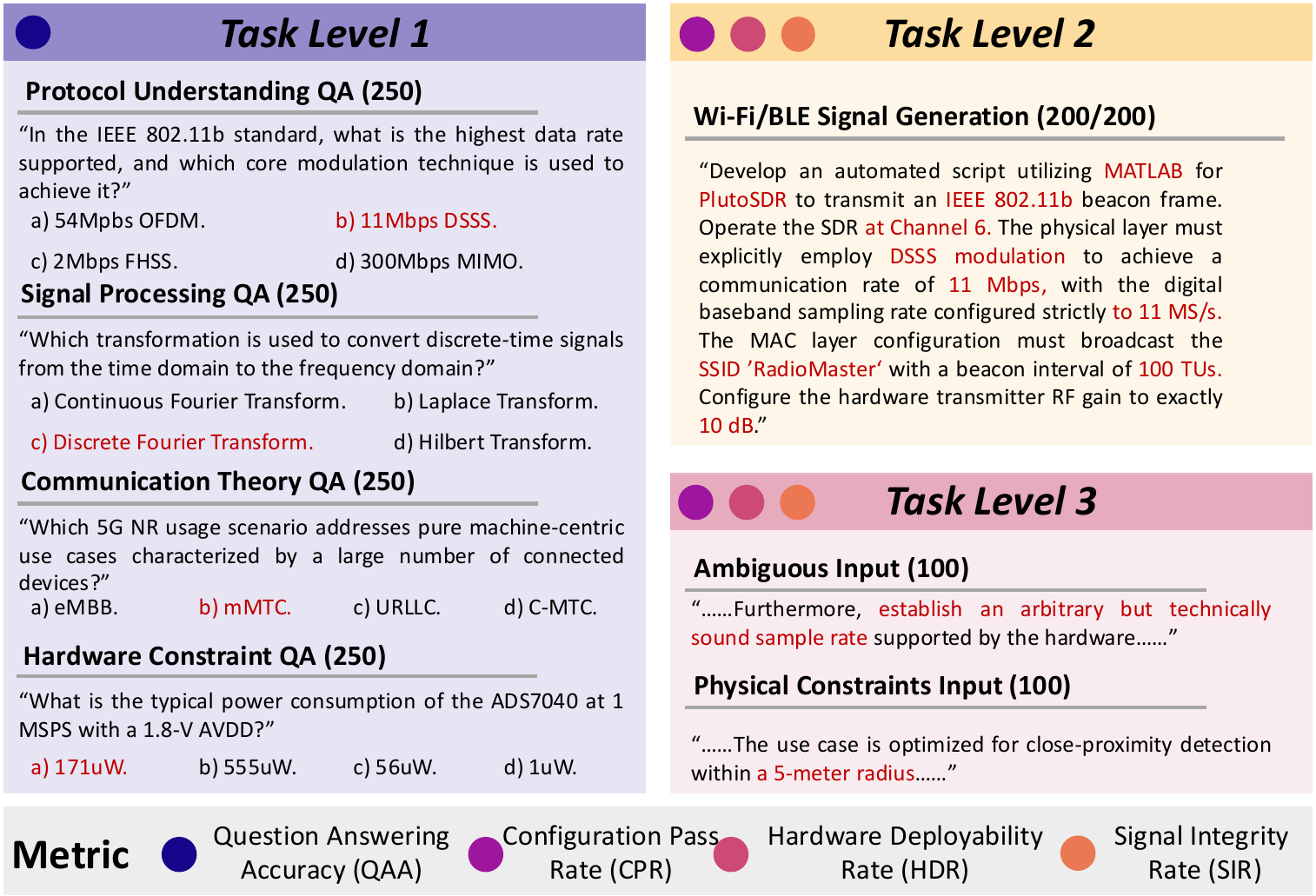}
    \caption{Overview of  RadioBench.}
    \label{fig:benchmark}
    \vspace{-5mm}
\end{figure}

\subsection{Benchmarking Autonomous Radio Signal Generation}\label{section:3.2}
To test this structural prediction, we need an evaluation standard covering the full radio signal generation lifecycle. Since no existing benchmark targets this end-to-end setting, we design \textbf{RadioBench}, the first domain-specific benchmark for quantifying the reasoning and configuration ability of LLMs and MAS in radio signal generation. It organizes evaluation into three progressive task levels.

\textit{\textbf{Task Level 1}} targets foundational domain knowledge. It contains one thousand expert-curated question-answering (QA) pairs evenly split across four core dimensions: protocol understanding (PU), signal processing (SP), communication theory (CT), and hardware constraints (HC).

\textit{\textbf{Task Level 2}} targets practical implementation, requiring models to turn explicit requirements into executable radio signals within real hardware limits. We cover the two most prevalent standards, namely Wi-Fi and Bluetooth Low Energy (BLE), and the two dominant configuration paradigms, MATLAB-based and UHD-based. This yields 400 cases, 100 per paradigm-protocol pair, denoted W-M, W-U, B-M, and B-U.

\textit{\textbf{Task Level 3}} targets deployment adaptability under complex conditions. It has 100 cases on ambiguous user requirements (AR-M and AR-U for MATLAB and UHD) and 100 cases resolving intricate demands under stringent physical constraints (PC-M and PC-U).

Both \textbf{\textit{Task Level 2}} and \textbf{\textit{Task Level 3}} require empirical validation on physical SDR platforms. We choose Wi-Fi (IEEE 802.11) and BLE as the primary protocols because both are supported by mature, widely deployed commercial receivers. This lets us verify against off-the-shelf devices whether an emitted waveform truly complies with the standard, not merely appears well-formed in a simulation. The hardware-in-the-loop validation uses packet sniffers to capture and verify over-the-air signal fidelity.

\textbf{Metrics.} We evaluate along four progressive metrics. Question Answering Accuracy (QAA) measures the factual correctness of theoretical responses. Configuration Pass Rate (CPR) measures software viability by confirming error-free script execution. Hardware Deployability Rate (HDR) measures whether configurations deploy onto real hardware without driver violations. Signal Integrity Rate (SIR) confirms whether the emitted waveform is captured and decoded by the target receiver. CPR, HDR, and SIR empirically measure the three nested success probabilities $\rho_{C}$, $\rho_{H}$, and $\rho_{S}$, and by construction inherit the funnel ordering of Lemma~\ref{lem:funnel}.

\begin{figure}[t]
    \centering
    \includegraphics[width=0.99\linewidth]{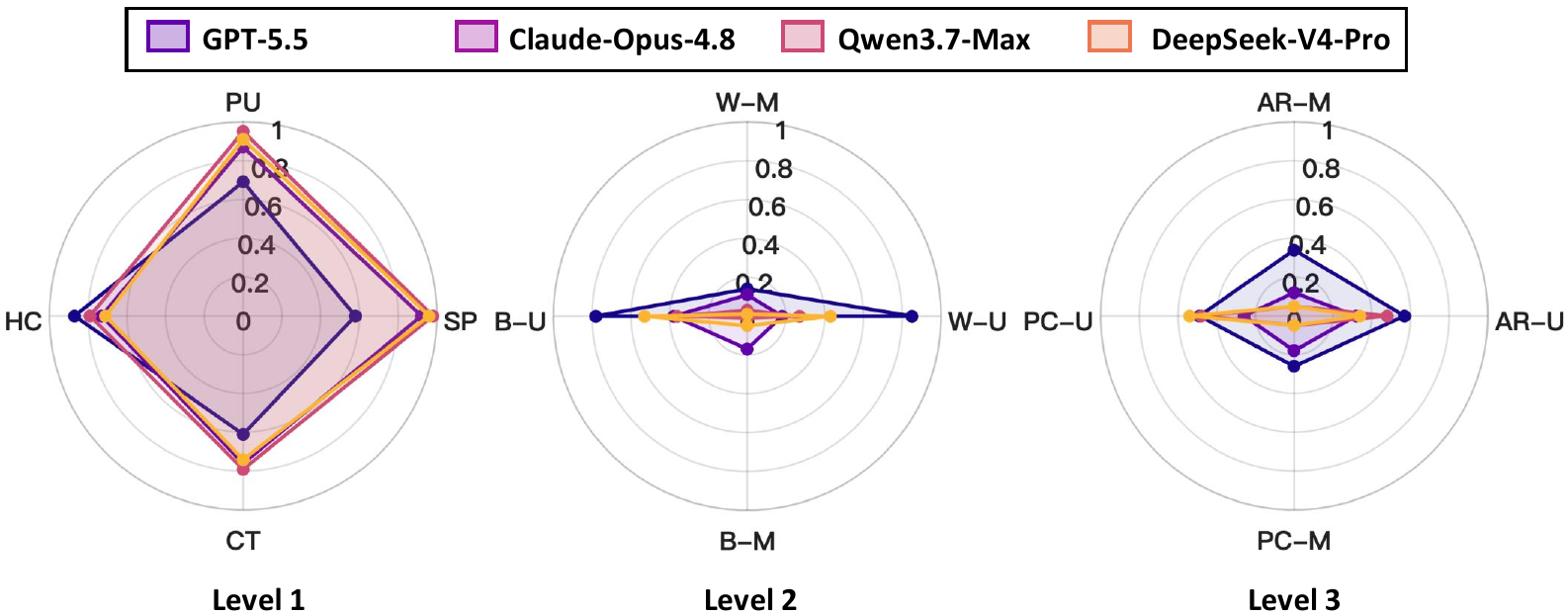}
    \caption{Preliminary evaluation of SOTA foundation models on RadioBench.}
    \label{fig:exp_prior}
    \vspace{-5mm}
\end{figure} 
\begin{figure}[t]
    \centering
    \includegraphics[width=0.85\linewidth]{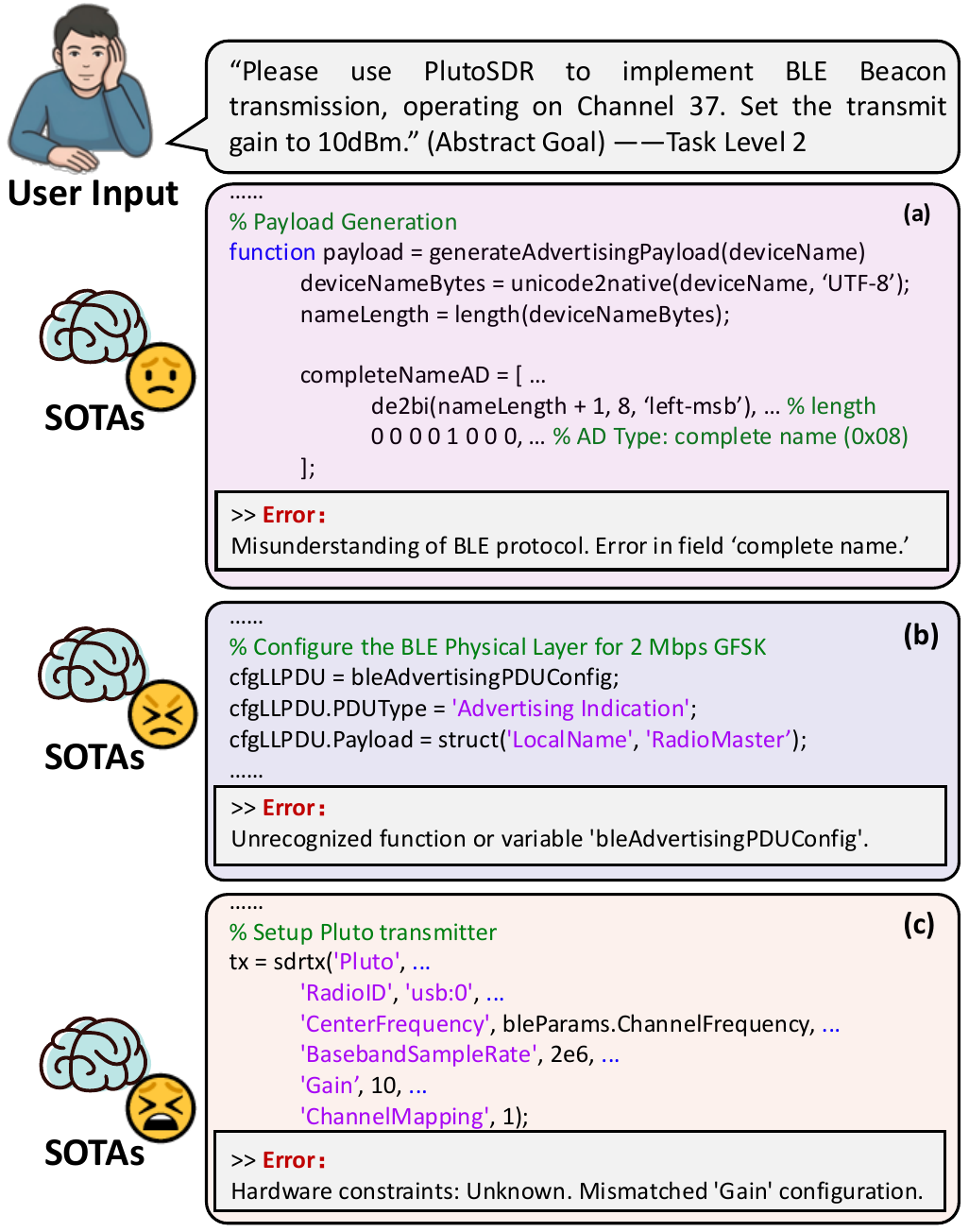}
    \caption{Analysis of fundamental limitations in current models. (a) Semantic misinterpretations. (b) Hallucinated APIs/functions. (c) Insensitivity to physical hardware constraints.}
    \label{fig:limitation}
    \vspace{-5mm}
\end{figure}

\subsection{Limitations}\label{section:3.3}
To evaluate SOTA foundation models on autonomous radio signal generation, we study four mainstream models under RadioBench, recording QAA for \textbf{\textit{Task Level 1}} and CPR for \textbf{\textit{Task Level 2}} and \textbf{\textit{Task Level 3}}. As shown in Fig.~\ref{fig:exp_prior}, models handle basic knowledge retrieval well, yet degrade sharply as tasks advance along the pipeline. Inspecting the generated scripts reveals three fundamental limitations.

\textbf{L1. Semantic misinterpretation of intricate protocols.} Lacking specialized wireless data, models miss the tight parameter coupling and rigid formatting required by standards, producing flawed payloads that fail validation (Fig.~\ref{fig:limitation}(a)). In Eq.~\eqref{eq:collapse}, this depresses $p_{1}$. Since a misconstrued field surfaces only as a corrupted waveform, it caps $\rho_{S}$ and drives $S_{\star}$ toward zero.

\textbf{L2. Hallucination of internal APIs and functions.} Without exact documentation, models synthesize plausible but nonexistent commands. In Fig.~\ref{fig:limitation}(b), a BLE script invokes the non-existent \texttt{bleAdvertisingPDUConfig} instead of the correct \texttt{bleLLAdvertisingChannelPDUConfig}, breaking the pipeline. This suppresses $p_{2}$; as an executable configuration precedes every downstream stage, it caps $\rho_{C}$ and propagates to $S_{\star}$.

\textbf{L3. Insensitivity to physical hardware constraints.} Models often ignore the target platform's operational limits. For example, a raw integer for a $10\,\mathrm{dBm}$ gain in MATLAB fails, since the numeric parameter is not linear in absolute power (Fig.~\ref{fig:limitation}(c)). This erodes $p_{3}$, so even a clean script fails to deploy, depressing $\rho_{H}$ and hence $S_{\star}$.

\textit{\textbf{Motivation.}} The bottlenecks thus stem from protocol misunderstanding, hallucinated functions, and unawareness of physical constraints. Efficiently integrating external domain knowledge with internal reasoning for robust autonomous configuration remains an open challenge.

%% file: 4_Design.tex
%
\begin{figure*}[t]
    \centering
    \includegraphics[width=0.85\linewidth]{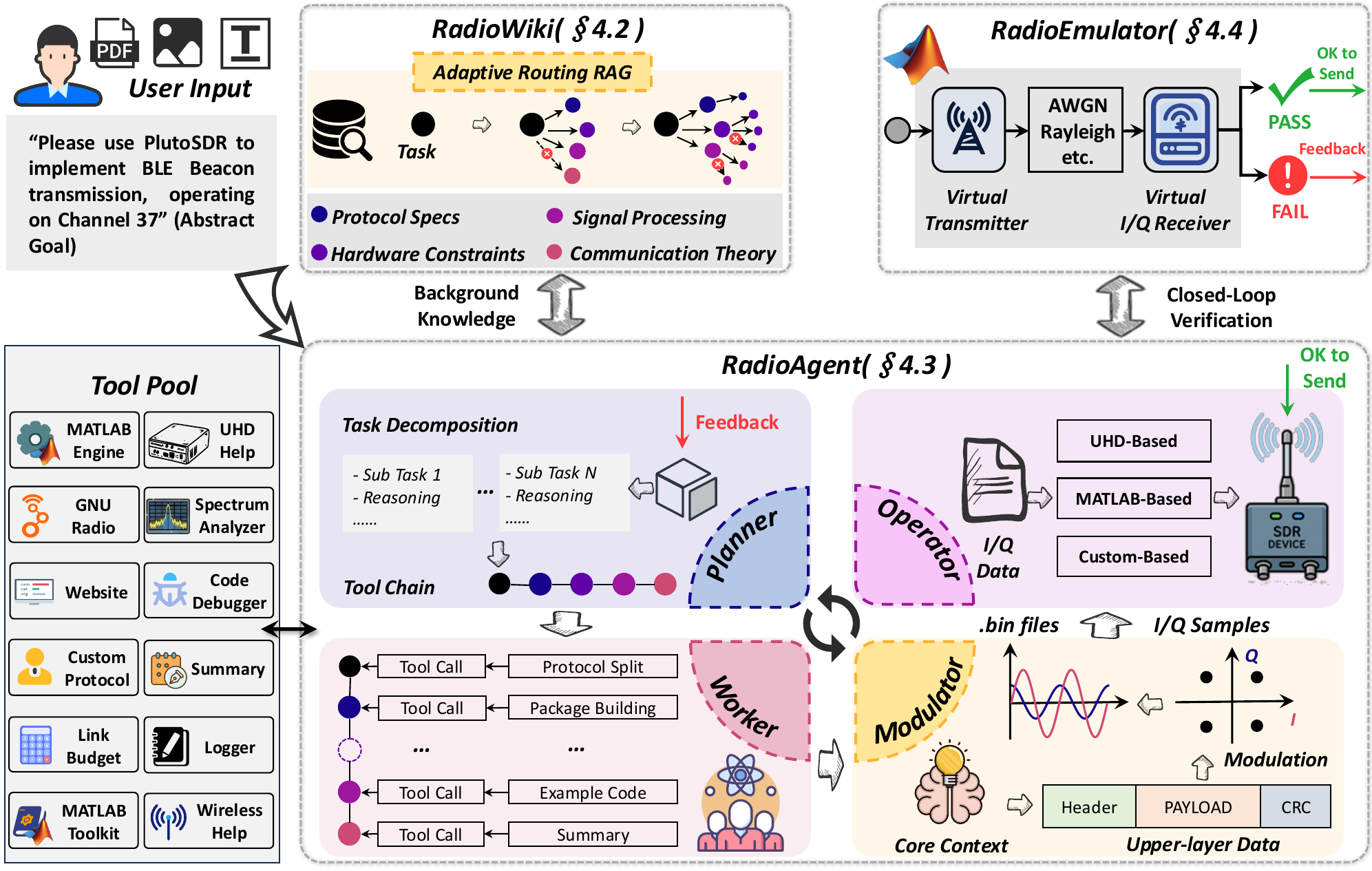}
    \caption{\sysname System Block Diagram.}
    \label{fig:overview}
    \vspace{-5mm}
\end{figure*}

\section{Design}
\subsection{Overview of \sysname}
As shown in Fig.~\ref{fig:overview}, we introduce \textbf{\sysname}, a fully autonomous multi-agent system for the full radio signal generation lifecycle, whose architecture directly mitigates the multiplicative collapse. It rests on three synergistic pillars, each formalized as a design principle and realized by a dedicated module. Under \textbf{P1 (Background Knowledge Grounding)}, RadioWiki injects verified external evidence via adaptive-routing RAG over a domain knowledge base, recovering erroneous attempts and lifting $p_{k}$. Under \textbf{P2 (Multi-Agent Collaboration)}, RadioAgent decomposes the monolithic operator $\Phi$ into independently executed, locally retriable stages, protocol planning, baseband synthesis, and hardware configuration, turning the fragile product into a term-wise larger one. Under \textbf{P3 (Closed-loop Verification Gating)}, RadioEmulator inserts a verification gate $\Phi_{v}$, absent from monolithic baselines, that validates the configuration by simulation and filters physically defective waveforms, the residual regime of high CPR yet collapsed SIR, routing anomalies back to RadioAgent for refinement. On success, RadioAgent's Operator synthesizes a pipeline configuration file that runs the full lifecycle via one command.

\subsection{Background Knowledge Grounding}\label{section:4.2}
Current LLMs rely on generalized corpora, lacking deep domain expertise and recent wireless protocol advances. To bridge this gap, we introduce \textbf{RadioWiki}, an extensible knowledge base tailored for the radio-frequency domain (Fig.~\ref{fig:radiowiki}). We pair it with a specialized RAG system that autonomously retrieves key information, integrates real-time online data, and compiles findings into a structured format, providing fact-grounded context that augments the reasoning and execution of downstream agentic modules.

\subsubsection{Data Collection}
These span ten prevailing wireless protocols: Wi-Fi, BLE, 3GPP 4G LTE, 3GPP 5G NR, LoRaWAN, Zigbee, SparkLink, RFID, UWB, and Ambient IoT. They further include mainstream SDR platforms such as the USRP series and ADALM-Pluto SDR, RF front-end datasheets, communication theory literature, and software toolbox guidelines.

We partition this knowledge base so that adding new materials only reconstructs the affected sections. Embedding models then transform diverse texts and diagrams into high-dimensional vectors for storage. The resulting adaptable database provides the granular context needed for robust end-to-end radio signal generation and remains easily upgradable as standards evolve and hardware iterates.

\subsubsection{Adaptive Routing RAG}
To support scalable and accurate knowledge grounding, we instantiate an adaptive routing RAG. Given a query $q$, this RAG selects a small set of relevant repositories, retrieves high-value evidence, and supplies structured context for downstream generation:
\begin{equation}
\hat{y}\sim p_{\theta}\!\left(y \mid q,\mathcal{R}(q)\right),
\end{equation}
where $\mathcal{R}(q)$ is the retrieved context set.
Let $\mathcal{C}=\{1,\dots,N\}$ denote domain repositories (e.g., wireless protocols, baseband toolchains, SDR hardware manuals). Adaptive routing follows a two-stage design to balance efficiency and robustness.

\textbf{Stage 1: Lexical fast routing.}
We compute a domain-weighted lexical confidence:
\begin{equation}
s_{\text{lex}}(j\mid q)=\sum_{t\in\phi(q)} w_t\,\mathbf{1}[t\in\mathcal{K}_j],
\end{equation}
where $\phi(q)$ is normalized tokenization, $\mathcal{K}_j$ is the keyword inventory of repository $j$, and $w_t$ upweights high-precision technical entities (e.g., chipset IDs, standard clauses, toolbox names). If
\begin{equation}
\max_{j} s_{\text{lex}}(j\mid q)\ge \tau_{\text{lex}},
\end{equation}
routing is finalized directly, avoiding embedding-based routing.

\textbf{Stage 2: Semantic centroid routing.}
If the lexical confidence $\max_{j} s_{\text{lex}}(j\mid q)$ falls below $\tau_{\text{lex}}$, we compute the query embedding $\mathbf{e}_q$ and repository centroids $\mathbf{c}_j$, and score:
\begin{equation}
s_{\text{sem}}(j\mid q)=\cos(\mathbf{e}_q,\mathbf{c}_j).
\end{equation}
A single-repository decision is accepted only under confidence and margin constraints:
\begin{equation}
s_{(1)}\ge \tau_{\text{abs}}, \qquad s_{(1)}-s_{(2)}\ge \tau_{\text{margin}},
\end{equation}
where $s_{(1)}$ and $s_{(2)}$ are the top-2 semantic scores. Otherwise, we use top-$M$ multi-repository fallback. Here, the two hyperparameters $\tau_{\text{abs}}$ and $\tau_{\text{margin}}$ control reliability and uniqueness, respectively.

\textbf{Hybrid retrieval and fusion.}
For the routed repository set, we acquire evidence through two channels, querying each repository with a dense retriever over the vector index $\mathcal{V}_j$ and a sparse retriever over the BM25 index $\mathcal{B}_j$, coupling semantic coverage with exact lexical matching. Let $\pi_v(d\mid q)$ and $\pi_b(d\mid q)$ be the normalized relevance scores of the dense ($\mathrm{DC}$) and sparse ($\mathrm{SC}$) channels for candidate chunk $d$. We aggregate them as
\begin{equation}
S(d\mid q)=\alpha\,\pi_v(d\mid q)+(1-\alpha)\,\pi_b(d\mid q)+\beta\,\mathbf{1}[d\in\mathrm{DC}\cap\mathrm{SC}],
\end{equation}
where $\alpha\in[0,1]$ controls the dense-sparse trade-off and the agreement bonus $\beta$ favors candidates supported by both views, which empirically stabilizes ranking under terminology drift and query paraphrase. The first term emphasizes semantic proximity, the second preserves high-precision entity-level matching such as standard clauses, chipset identifiers, and toolbox names, and the third acts as a consistency prior promoting cross-view corroboration. After fusion, we suppress near-duplicates to remove redundant chunks, then rank the deduplicated pool globally to yield a compact, high-fidelity context set for downstream generation.
\begin{figure}[t]
    \centering
    \includegraphics[width=0.9\linewidth]{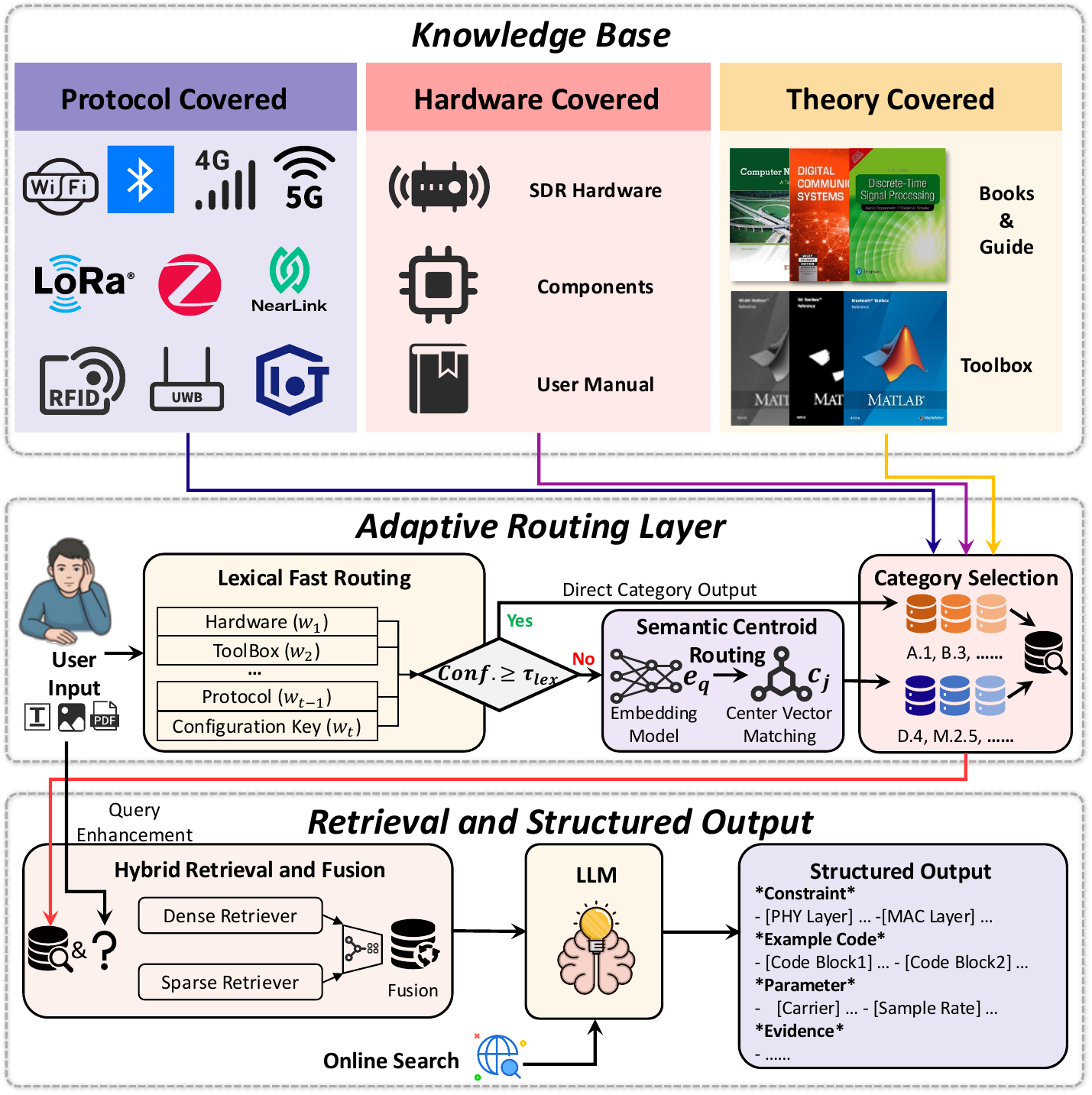}
    \caption{Architecture of RadioWiki.}
    \label{fig:radiowiki}
    \vspace{-5mm}
\end{figure}

\subsubsection{Structured Generation}
An LLM then synthesizes the heterogeneous data into a standardized format that guides RadioAgent's reasoning. It aggregates candidate contexts from the local hybrid retriever and dynamic online search. Guided by engineered prompts, it then filters out noise and reorganizes the remaining information into a cohesive, structured output. As shown in Fig.~\ref{fig:radiowiki}, this distills essential protocol constraints, hardware-specific parameters, authoritative evidence, and verified code snippets into a unified paradigm.
RadioWiki thus turns fragmented domain knowledge into deterministic context, mitigating hallucination and equipping RadioAgent with precise specifications for robust SDR configuration. This realizes principle~\textbf{P1 (Background Knowledge Grounding)} and directly addresses the semantic misinterpretation of intricate protocols (\textbf{L1}), while partially alleviating the hallucination of internal APIs and functions (\textbf{L2}), leaving residual generation-time errors to the downstream verification loop.

\subsection{Multi-Agent Collaboration}\label{section:4.3}
To orchestrate radio signal generation, we propose \textbf{RadioAgent}, an autonomous multi-agent framework with four synergistic roles, the \textit{Planner}, \textit{Worker}, \textit{Modulator}, and \textit{Operator}. They operate within a tightly coordinated pipeline, dividing responsibilities to form a robust iterative generation and refinement loop. We further provide a comprehensive tool pool for dynamic tool invocation by each agent, enhancing their capabilities across diverse tasks.

\subsubsection{Planner}
As shown in Fig.~\ref{fig:overview}, the Planner is the cognitive reasoning center. Given the user input and the structured context from RadioWiki, it decomposes the overall task into a chronological sequence of atomic subtasks with explicit logical evidence, then synthesizes a deterministic tool chain that dictates the execution workflow for downstream modules. The Planner also adapts by integrating diagnostic feedback from RadioEmulator's closed-loop verification into its contextual memory, enabling replanning and self-correction in later iterations.

\subsubsection{Worker}
The Worker executes the tool chain from the Planner, dynamically invoking specialized utilities from the tool pool for operations such as protocol splitting, package building, and code debugging. For deterministic execution, every invocation follows a rigid template within \texttt{<Call></Call>} tags. Three semantic delimiters define its parameters: \texttt{<Tool></Tool>} names the utility, \texttt{<Query></Query>} holds the input payload, and \texttt{<Purpose></Purpose>} states the objective. The sequence ends with an \texttt{<EndCall></EndCall>} tag, and the synthesized results form the summary that guides the next stage.

\subsubsection{Modulator}
Using the Worker's structured summary, the Modulator runs the core digital baseband pipeline. It assembles upper-layer protocol data packets and generates the corresponding bitstream. Guided by user requirements and physical-layer constraints, it applies digital modulation to map the bitstream into the complex domain. The resulting In-phase and Quadrature (I/Q) samples are saved to a \texttt{.bin} file, providing the exact baseband waveform for physical transmission.

\subsubsection{Operator}
The Operator bridges digital baseband synthesis and physical deployment by generating executable configuration scripts for the target platform, including UHD APIs, MATLAB environments, and custom protocols. It ingests the Modulator's \texttt{.bin} file and treats the samples as the sequence destined for aerial transmission. For reliability, a strict gate initiates deployment only after RadioEmulator passes its validation loop. Once authorized, the Operator synthesizes the final \texttt{run\_pipeline} executable, letting users run the fully verified generation lifecycle via a single command.

Collectively, this division of labor instantiates principle~\textbf{P2 (Multi-Agent Collaboration)}. The four roles realize the three stages of the operator $\Phi$: the Planner and Worker carry out protocol planning ($\Phi_{1}$), the Modulator performs baseband synthesis ($\Phi_{2}$), and the Operator handles hardware configuration ($\Phi_{3}$), so the pipeline reduces to $K=3$ independently executed stages. By the grounding of Section~\ref{section:4.2} (P1), each per-stage correctness is lifted from $p_{k}\in(0,1)$ to $\tilde{p}_{k}=p_{k}+(1-p_{k})\,g_{k}$ via a recovery fraction $g_{k}\in[0,1]$, so $\tilde{p}_{k}\ge p_{k}$. Within stage $k$, an erroneous output is detected with probability $\delta_{k}\in[0,1]$ and locally retried up to $r_{k}$ times. Writing $u_{k}=(1-\tilde{p}_{k})\,\delta_{k}$ for a detected-and-retried failure, the effective stage correctness after retries is
\begin{equation}
q_{k}=\tilde{p}_{k}\,\frac{1-u_{k}^{\,r_{k}}}{1-u_{k}}.
\label{eq:qk}
\end{equation}
Replacing the monolithic operator with $K$ independently grounded and retriable stages then lifts the end-to-end success rate.

\begin{theorem}[Decomposition and grounding gain]\label{thm:chain}
Under independent per-stage execution, the end-to-end success probability obeys
\begin{equation}
\rho_{S}\;\ge\;\prod_{k=1}^{K}q_{k}\;\ge\;\prod_{k=1}^{K}\tilde{p}_{k}\;\ge\;\prod_{k=1}^{K}p_{k}\;=\;S_{\star},
\label{eq:chain}
\end{equation}
where $S_{\star}$ is the monolithic single-agent baseline of Eq.~\eqref{eq:collapse}.
\end{theorem}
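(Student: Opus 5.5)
The chain in Eq.~\eqref{eq:chain} splits into three inequalities and one identity, and I will prove them from right to left.

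\textbf{Rightmost identity and inequality.} The final equality $\prod_k p_k=S_\star$ is just Eq.~\eqref{eq:collapse}. For $\prod_k\tilde p_k\ge\prod_k p_k$, I use
\[
\tilde p_k-p_k=(1-p_k)g_k\ge 0,
\]
which holds because $p_k\in(0,1)$ and $g_k\in[0,1]$. Since all factors are nonnegative, the products compare termwise: if $0\le a_k\le b_k$ for each $k$, then $\prod_k a_k\le\prod_k b_k$. I will prove this once by induction on $K$ and reuse it below. I also note that $\tilde p_k\le 1$, since $\tilde p_k$ is a convex combination of $p_k$ and $1$.

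\textbf{Middle inequality.} This needs $q_k\ge\tilde p_k$. Assume $r_k\ge1$ and $u_k<1$. Expand the fraction as a geometric sum,
\[
\frac{1-u_k^{r_k}}{1-u_k}=\sum_{j=0}^{r_k-1}u_k^{\,j}\;\ge\;1 .
\]
Here $u_k=(1-\tilde p_k)\delta_k\in[0,1)$ because $\tilde p_k\ge p_k>0$. Multiplying by $\tilde p_k\ge0$ gives $q_k\ge\tilde p_k$, and the same termwise product lemma then gives $\prod_k q_k\ge\prod_k\tilde p_k$. If $r_k=0$ were allowed, Eq.~\eqref{eq:qk} would give $q_k=0$. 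I will therefore state the convention $r_k\ge1$ explicitly, reading $r_k$ as the number of attempts including the first.

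I also check that $q_k\le1$, so that $q_k$ is a genuine probability. The partial geometric sum $\tilde p_k\sum_{j<r}u_k^j$ is exactly the probability of a success within $r$ attempts, where each failure is detected and retried with probability $u_k/(1-\tilde p_k)=\delta_k$. Its limit is $\tilde p_k/(1-u_k)\le1$, since
\[
1-u_k=1-\delta_k+\delta_k\tilde p_k\ge\tilde p_k .
\]
This probabilistic reading also justifies interpreting $q_k$ as the probability that stage $k$ ends correct.

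\textbf{Leftmost inequality (the main obstacle).} The step $\rho_S\ge\prod_k q_k$ is where the modeling assumption carries all the weight. Let $A_k$ be the event that stage $k$ ends correct after grounding and retries, so that $\Pr(A_k)=q_k$. By the problem formulation in Section~\ref{section:3.1}, a faithful emission is produced whenever all three stages are correct, so $\bigcap_k A_k\subseteq\mathcal{E}_S$. This uses the same kind of event containment as the proof of Lemma~\ref{lem:funnel}, giving
\[
\rho_S\ge\Pr\Bigl(\bigcap_k A_k\Bigr).
\]
The hypothesis of independent per-stage execution then gives $\Pr(\bigcap_k A_k)=\prod_k q_k$.

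The delicate point is that the inequality here, rather than an equality, comes from two sources. First, the verification gate $\Phi_v$ and cross-stage replanning can only add further success paths. Second, the independence assumption must be invoked on the post-retry events $A_k$, not on the raw attempts. I will phrase the independence hypothesis exactly this way, so that local retries within one stage do not correlate with the other stages' outcomes. Chaining the three inequalities and the identity completes the proof.
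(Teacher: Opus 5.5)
Your proposal is correct and follows the paper's proof essentially step for step. You use event containment plus independence for $\rho_S\ge\prod_k q_k$, the geometric-sum expansion $\frac{1-u_k^{r_k}}{1-u_k}=\sum_{i=0}^{r_k-1}u_k^{i}\ge1$ for $q_k\ge\tilde p_k$, and $(1-p_k)g_k\ge0$ for $\tilde p_k\ge p_k$, then multiply termwise; your extra checks (the convention $r_k\ge1$ counting total attempts, which the paper's phrase ``retried up to $r_k$ times'' leaves ambiguous, and the verification that $q_k\le1$) are sound refinements the paper leaves implicit.
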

\begin{proof}
Correct outputs at all stages are sufficient for end-to-end success and the stages execute independently, giving the first inequality. Since $0\le u_{k}<1$, the retry factor in \eqref{eq:qk} expands as $\tfrac{1-u_{k}^{\,r_{k}}}{1-u_{k}}=\sum_{i=0}^{r_{k}-1}u_{k}^{\,i}\ge1$, hence $q_{k}\ge\tilde{p}_{k}$; and $g_{k}\ge0$ with $p_{k}<1$ gives $\tilde{p}_{k}\ge p_{k}$. Multiplying the three chains over $k$ yields \eqref{eq:chain}, with strict inequality whenever some stage has $g_{k}>0$, or $\delta_{k}>0$, $r_{k}>1$, and $\tilde{p}_{k}<1$.
\end{proof}

Equation~\eqref{eq:chain} shows how RadioWiki (via $g_{k}$) and the multi-agent decomposition (via $\delta_{k}$ and $r_{k}$) counteract the multiplicative collapse $S_{\star}=\prod_{k}p_{k}$ term by term, converting the bare product into $\prod_{k}q_{k}$. The within-stage detect-and-retry ($\delta_{k}$, $r_{k}$) further closes API hallucination (\textbf{L2}): a surviving hallucinated call breaks execution, is detected in place, and is repaired before propagating to $S_{\star}$.

\subsection{Closed-Loop Verification Gating}\label{section:4.4}
Unlike software agents that debug syntax through virtual compilation, physical radio deployment is far more complex. In \sysname, passing baseband synthesis without software exceptions does not guarantee transmission fidelity, since emitted waveforms may still carry structural defects such as inaccurate transmission rates or malformed protocol fields.

To address this, we design \textbf{RadioEmulator}, a closed-loop verification safeguard preceding hardware emission. It builds an end-to-end virtual communication system in MATLAB, with a virtual transmitter, realistic channel models, and a virtual receiver, to evaluate the synthesized I/Q samples. On anomalies, it routes diagnostic feedback to the Planner for iterative refinement; on success, it authorizes the Operator to synthesize the final \texttt{run\_pipeline} executable. This ensures that only verified, high-fidelity configurations reach the physical platform.

This realizes principle~\textbf{P3 (Closed-loop Verification Gating)}. The verification operator $\Phi_{v}$ passes each waveform through the virtual receiver and summarizes its fidelity by the receiver-side metric vector $\hat{m}=(\mathrm{EVM},\mathrm{BER},\mathrm{PAPR})$, measured after channel propagation and demodulation. Deployment is authorized only when these metrics fall within the acceptance region
\begin{equation}
\mathcal{A}=\big\{\hat{m}:\ \mathrm{EVM}<\tau_{\mathrm{EVM}},\ \mathrm{BER}<\tau_{\mathrm{BER}},\ \mathrm{PAPR}<\tau_{P}\big\},
\label{eq:accept}
\end{equation}
with protocol-dependent thresholds. Let $\omega$ be the fraction of upstream waveforms that are physically faithful and $\eta$ the probability that the gate rejects a defective one.

\begin{theorem}[Deployment precision under gating]\label{thm:gate}
Assuming faithful waveforms are always accepted, the precision of the deployed configurations is
\begin{equation}
\Pi=\frac{\omega}{\omega+(1-\omega)(1-\eta)},
\label{eq:gate}
\end{equation}
which increases monotonically in the gate sensitivity $\eta$.
\end{theorem}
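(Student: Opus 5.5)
The argument is a direct application of Bayes' rule to the event that a waveform passes the gate $\Phi_{v}$, followed by a one-line monotonicity check. Model an upstream waveform as a random draw that is faithful with probability $\omega$ and defective with probability $1-\omega$. Let $\mathcal{D}$ denote the deployment event, i.e.\ the event that the receiver-side metric vector $\hat{m}$ lands in the acceptance region $\mathcal{A}$ of Eq.~\eqref{eq:accept}. Precision is the conditional probability $\Pi=\Pr[\text{faithful}\mid\mathcal{D}]$.

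\textbf{Step 1: compute the conditional acceptance rates.} By the stated assumption, faithful waveforms are always accepted, so $\Pr[\mathcal{D}\mid\text{faithful}]=1$. By the definition of $\eta$, a defective waveform is rejected with probability $\eta$, so $\Pr[\mathcal{D}\mid\text{defective}]=1-\eta$.

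\textbf{Step 2: apply the law of total probability.} This gives $\Pr[\mathcal{D}]=\omega+(1-\omega)(1-\eta)$.

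\textbf{Step 3: apply Bayes' rule.} We get $\Pi=\omega\cdot 1/\Pr[\mathcal{D}]$, which is exactly Eq.~\eqref{eq:gate}. The denominator is at least $\omega$, so the expression is well defined whenever $\omega>0$. I would note separately the degenerate case $\omega=0$ with $\eta=1$, where nothing is deployed and precision is undefined or taken by convention.

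\textbf{Step 4: monotonicity in $\eta$.} For fixed $\omega\in(0,1)$, the denominator $\omega+(1-\omega)(1-\eta)$ is strictly decreasing and positive in $\eta\in[0,1]$. Equivalently,
\[
\frac{\partial\Pi}{\partial\eta}=\frac{\omega(1-\omega)}{\bigl(\omega+(1-\omega)(1-\eta)\bigr)^{2}}>0 .
\]
Hence $\Pi$ increases strictly in $\eta$. Its value rises from $\Pi=\omega$ at $\eta=0$, where the gate is useless and deployment precision equals the ungated baseline, to $\Pi=1$ at $\eta=1$. At the endpoints $\omega\in\{0,1\}$, $\Pi$ is constant, so the increase is only weak there.

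\textbf{Main obstacle.} The algebra is routine; the real care lies in stating the probabilistic model precisely. The theorem implicitly treats the gate's decision on a defective waveform as a single rejection probability $\eta$ that does not depend on which defect occurred. Otherwise $\eta$ must be read as the average rejection rate over the defect distribution, and the same computation goes through by conditioning on defect type and averaging. I would make this interpretation explicit, together with the handling of the degenerate endpoints, so that the formula and the monotonicity claim are unambiguous.
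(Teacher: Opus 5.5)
Your proposal is correct and takes essentially the paper's route: the paper also computes the accepted mass $\omega+(1-\omega)(1-\eta)$ and takes its faithful fraction, which is exactly your total-probability-plus-Bayes step. Your derivative computation and your treatment of the degenerate cases ($\omega\in\{0,1\}$, and the undefined case $\omega=0,\ \eta=1$) supply the monotonicity argument that the paper asserts but does not prove.
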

\begin{proof}
A faithful waveform passes with probability one and a defective one with probability $1-\eta$, so the accepted waveforms number $\omega+(1-\omega)(1-\eta)$, of which the faithful fraction is \eqref{eq:gate}.
\end{proof}

Hence, a stricter gate ($\eta\to1$) drives $\Pi\to1$, whereas removing it ($\eta\to0$) leaves residual defects exposed. Since the gate cannot fabricate faithful waveforms, this higher $\Pi$ raises the signal-integrity rate $\rho_{S}$ (SIR) only insofar as rejected outputs are repaired. The upstream rates $\rho_{C}$ and $\rho_{H}$ are fixed before the gate and stay invariant to $\eta$, so the gate reshapes only $\rho_{S}$. The same diagnostics drive a bounded refinement loop where each round repairs a residual defect with probability at least $\kappa$, so the unresolved-defect probability decays geometrically as $(1-\kappa)^{\mathcal{G}}$ over $\mathcal{G}$ rounds. Thus, the gate resolves the two failures no text-level check can reach: hardware insensitivity (\textbf{L3}), which surfaces only after channel propagation and depresses $\rho_{H}$; and residual protocol misinterpretation (\textbf{L1}), which manifests as a corrupted waveform collapsing $\rho_{S}$ under high CPR.

%% file: 5_Experiments.tex
\section{Evaluation}
\begin{figure}[t]
    \centering
    \includegraphics[width=0.99\linewidth]{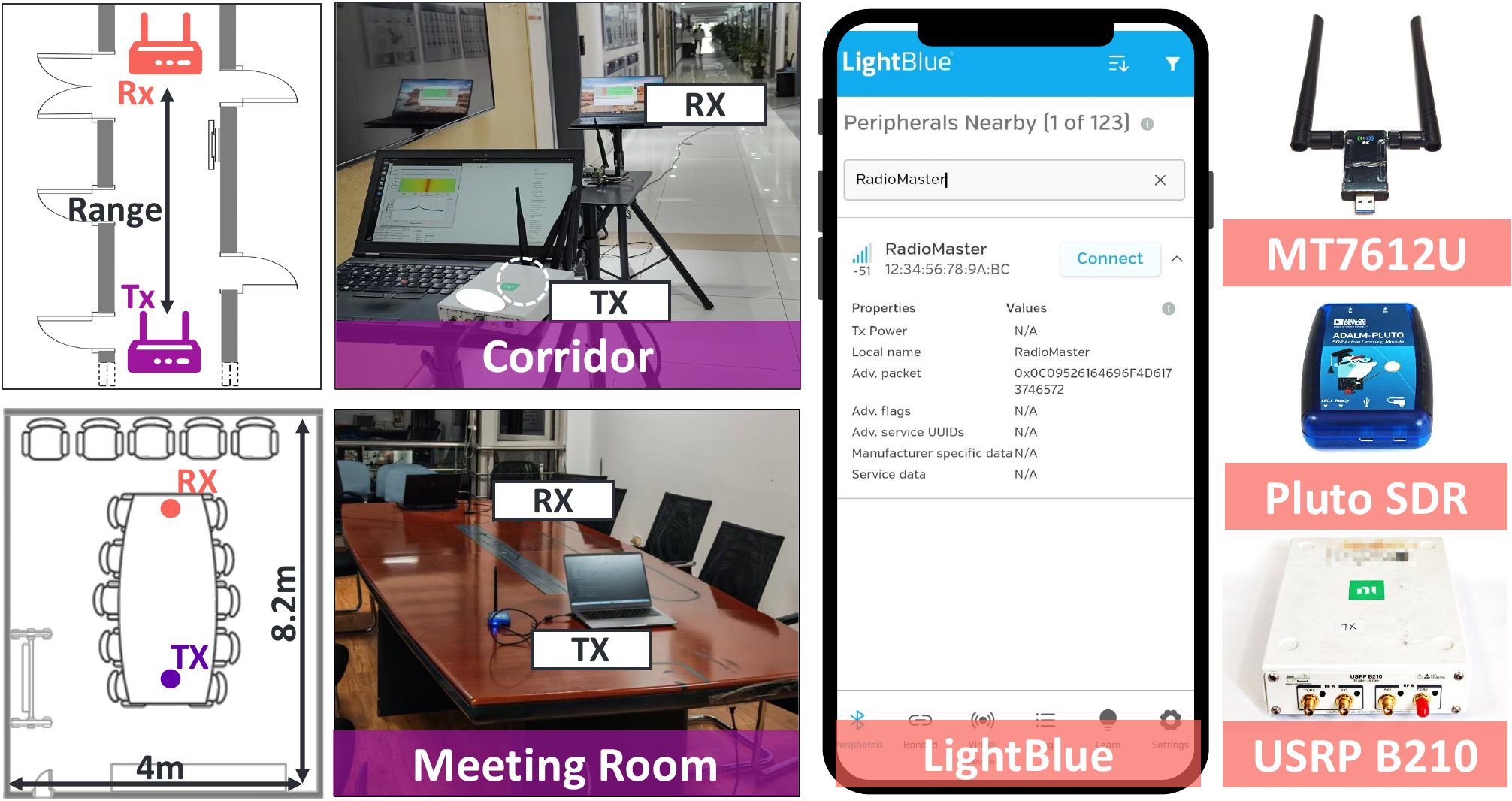}
    \caption{Experimental Setup.}
    \label{fig:setup}
    \vspace{-3mm}
\end{figure}
\begin{table}[t]
\centering
\caption{Backbone selection for \sysname on RadioBench.}
\label{table: result_different_infra}
\resizebox{0.49\textwidth}{!}{
\begin{tabular}{ l | c | cc | cc | cc }
\toprule
\multirow{2}{*}{\textbf{Method}} & \multicolumn{1}{c}{\textbf{QAA}} & \multicolumn{2}{|c|}{\textbf{CPR}} & \multicolumn{2}{c|}{\textbf{HDR}} & \multicolumn{2}{c}{\textbf{SIR}} \\
& \multicolumn{1}{c}{Level 1} & \multicolumn{1}{|c}{Level 2} & \multicolumn{1}{c|}{Level 3} & \multicolumn{1}{c}{Level 2} & \multicolumn{1}{c|}{Level 3} & \multicolumn{1}{c}{Level 2} & \multicolumn{1}{c}{Level 3} \\
\midrule
Qwen3.7-Max~\cite{Qwen} & 0.97 & 0.88 & 0.79 & 0.85 & 0.77 & 0.79 & 0.67 \\
DeepSeek-V4-Pro~\cite{DeepSeek} & 0.95 & 0.86 & 0.77 & 0.84 & 0.75 & 0.77 & 0.66 \\
Claude-Opus-4.8~\cite{Claude} & 0.97 & 0.89 & 0.79 & 0.87 & 0.78 & 0.79 & 0.68 \\
GPT-5.5~\cite{GPT} & 0.97 & 0.88 & 0.80 & 0.87 & 0.77 & 0.80 & 0.68 \\
\bottomrule
\end{tabular}
}
\vspace{-5mm}
\end{table}

\begin{figure*}[t]
    \centering
    \includegraphics[width=0.9\linewidth]{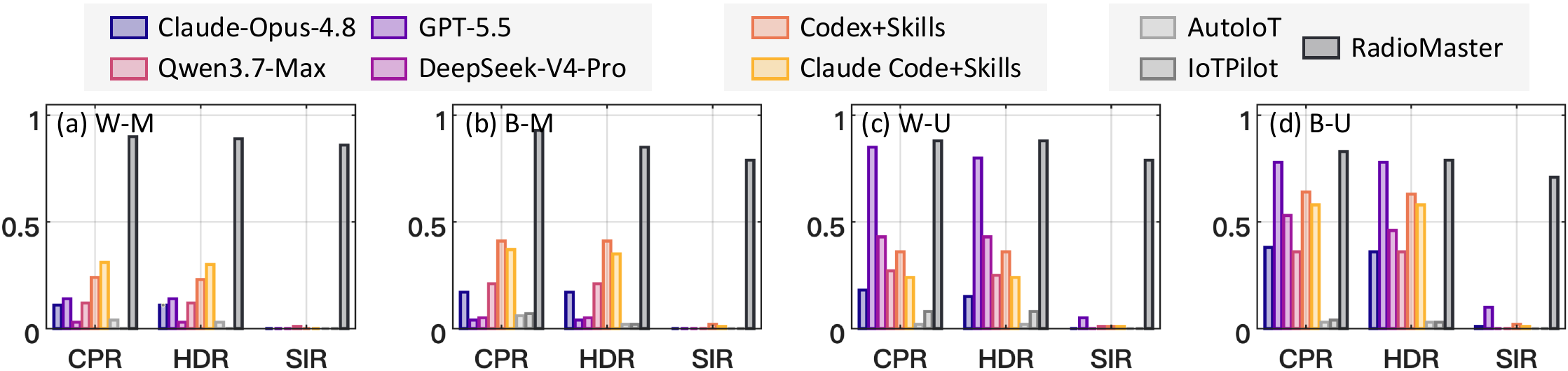}
    \caption{Main results on \textbf{\textit{Task Level 2}}. (a) W-M and (b) B-M correspond to MATLAB-based Wi-Fi and BLE; (c) W-U and (d) B-U correspond to UHD-based Wi-Fi and BLE.}
    \label{fig:mainresult_level2}
    \vspace{-3mm}
\end{figure*} 
\begin{figure*}[t]
    \centering
    \includegraphics[width=0.9\linewidth]{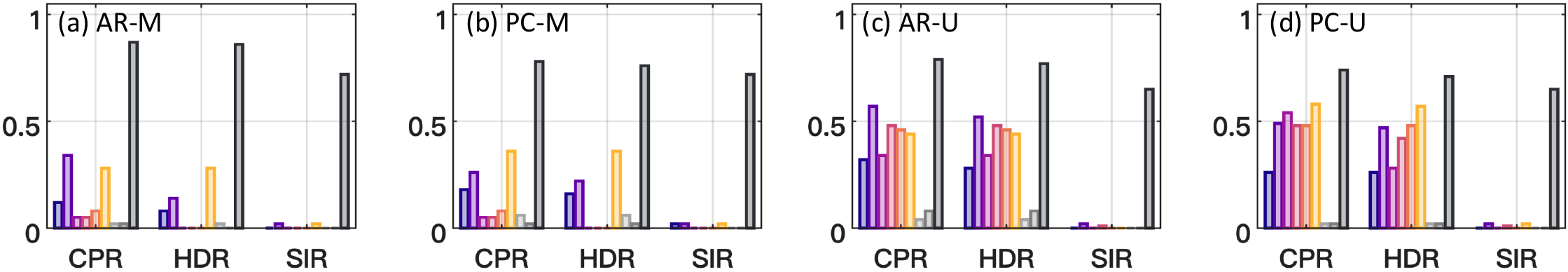}
    \caption{Main results on \textbf{\textit{Task Level 3}}. (a) AR-M and (b) PC-M correspond to MATLAB-based ambiguous requirements and physical constraints; (c) AR-U and (d) PC-U correspond to UHD-based ambiguous requirements and physical constraints.}
    \label{fig:mainresult_level3}
    \vspace{-5mm}
\end{figure*} 

\subsection{Experimental Settings}\label{section:5.1}
\subsubsection{Baselines}
Building on the benchmark design in Section~\ref{section:3.2}, we evaluate our \sysname framework against a diverse set of state-of-the-art methods under RadioBench, grouped as follows.

\textbf{Foundation Models.} We compare against several SOTA LLMs, including Qwen3.7-Max~\cite{Qwen}, DeepSeek-V4-Pro~\cite{DeepSeek}, Claude-Opus-4.8~\cite{Claude}, and GPT-5.5~\cite{GPT}. These models represent the frontier of AI, with strong capabilities in complex task planning, semantic comprehension of technical protocols, and context-aware code generation.

\textbf{Open-Source Multi-Agent Systems.} MAS for the broader Internet of Things (IoT) ecosystem has proliferated rapidly, and reliable radio signal generation is a foundational pillar of their network connectivity. We evaluate representative frameworks, including AutoIoT~\cite{AutoIoT} and IoTPilot~\cite{IoTPilot}, which parse ambiguous user requirements, orchestrate task workflows, and synthesize configuration files for generic IoT deployments.

\textbf{CLI Agent with MATLAB Agentic Toolkit.} To test whether general-purpose agentic coding assistants with domain tool access can bridge the intent-to-air gap, we equip two state-of-the-art coding CLIs, Claude Code~\cite{ClaudeCode} and Codex~\cite{Codex}, with the MATLAB Agentic Toolkit~\cite{MatlabToolkit}. The toolkit grants programmatic access to the Communications, WLAN, and Bluetooth toolboxes, letting the agent synthesize, run, and iteratively repair configuration scripts directly against the MATLAB runtime.

\subsubsection{Metrics}
All evaluations are conducted within RadioBench, using a tiered metric suite matched to each task level as detailed in Section~\ref{section:3.2}. i) For \textbf{\textit{Task Level 1}}, dominated by foundational knowledge queries, we use Question Answering Accuracy (QAA) to measure the precision of information retrieval and domain reasoning. ii) For the practical challenges in \textbf{\textit{Task Level 2}} and \textbf{\textit{Task Level 3}}, we adopt a progressive three-tier system assessing the depth and physical viability of task completion, namely Configuration Pass Rate (CPR), Hardware Deployability Rate (HDR), and Signal Integrity Rate (SIR).

\subsection{Implementation}\label{section:5.2}
\subsubsection{Details}
To assess successful over-the-air emission, we deploy generated configurations on physical hardware testbeds. As shown in Fig.~\ref{fig:setup}, the PlutoSDR and USRP B210 act as transmitters for the MATLAB-based and UHD-based methods, respectively. For validation, we use commercial receivers. A MediaTek MT7612U interface with Omnipeek decodes Wi-Fi packets, while the LightBlue application uses a smartphone's BLE receiver to parse BLE advertising packets. This end-to-end verification ensures full interoperability with standard commercial devices.

RadioBench test cases are evenly split between the two configuration paradigms. We build the dataset by LLM-assisted generation followed by manual sanitization and expert calibration to guarantee strict protocol compliance.

\subsubsection{Backbone model}
To choose a backbone for \sysname, we instantiate the framework with several representative LLMs and evaluate end-to-end performance (Table~\ref{table: result_different_infra}). Results are largely consistent across backbones, indicating that \sysname's effectiveness stems from its architecture rather than any particular model. We thus adopt Qwen3.7-Max~\cite{Qwen} as the default backbone. For a fair comparison, both the AutoIoT~\cite{AutoIoT} and IoTPilot~\cite{IoTPilot} systems and the two CLI agents run on the same Qwen3.7-Max backbone.

Finally, to bound autonomous code-generation cost, RadioAgent limits the code debugger to 10 local iterations. If closed-loop validation fails, RadioEmulator returns diagnostic feedback for regeneration, capped at 5 global attempts to balance signal fidelity and computational overhead.

\begin{table}[t]
\centering
\caption{Quantitative Comparisons on RadioBench in \textbf{\textit{Task Level 1}}. The best results are marked in bold.}
\label{table: result_level1}
\resizebox{0.48\textwidth}{!}{
\begin{tabular} {l | c | c | c | c | c }
\toprule
\textbf{Method} & \textbf{CT} & \textbf{HC} & \textbf{PU} & \textbf{SP} & \textbf{Overall QAA} \\
\midrule
\multicolumn{6}{c}{\textit{Foundation Models}} \\
\midrule
Qwen3.7-Max~\cite{Qwen}  & 0.79 & 0.79 & 0.95 & 0.98 & 0.87 \\
DeepSeek-V4-Pro~\cite{DeepSeek} & 0.74 & 0.71 & 0.91 & 0.96 & 0.83 \\
Claude-Opus-4.8~\cite{Claude} & 0.76 & 0.73 & 0.87 & 0.92 & 0.82 \\
GPT-5.5~\cite{GPT} & 0.61 & 0.87 & 0.69 & 0.58 & 0.69 \\
\midrule
\multicolumn{6}{c}{\textit{Open-Source Multi-Agent Systems}} \\
\midrule
AutoIoT~\cite{AutoIoT} & 0.79 & 0.78 & 0.94 & 0.94 & 0.86 \\
IoTPilot~\cite{IoTPilot} & 0.75 & 0.76 & 0.95 & 0.95 & 0.85 \\
\midrule
\multicolumn{6}{c}{\textit{CLI Agent with MATLAB Agentic Toolkit}} \\
\midrule
Claude Code~\cite{ClaudeCode} + MATLAB ToolKit~\cite{MatlabToolkit} & 0.79 & 0.74 & \textbf{0.97} & 0.97 & 0.87 \\
Codex~\cite{Codex} + MATLAB ToolKit~\cite{MatlabToolkit} & 0.76 & 0.75 & 0.95 & 0.98 & 0.86 \\
\midrule
\rowcolor{gray!20}\textbf{\sysname (Ours)} & \textbf{0.97} & \textbf{0.97} & 0.96 & \textbf{0.99} & \textbf{0.97} \\
\bottomrule
\end{tabular}
}
\vspace{-5mm}
\end{table}

\subsection{Main Results}\label{section:5.3}
\textbf{Task Level 1.} As reported in Table~\ref{table: result_level1}, all methods answer foundational questions competently, with QAA from $0.69$ to $0.87$. \sysname attains the highest $0.97$, widening its margin most on the knowledge-intensive dimensions of communication theory and hardware constraints, where others stay roughly $0.6-0.9$. Thus, factual recall is not the bottleneck; the difficulty arises only when knowledge must become a physically faithful emission.

\textbf{Task Level 2.} Fig.~\ref{fig:mainresult_level2} exposes the multiplicative collapse of Section~\ref{section:3.1}: baselines degrade sharply along the \texttt{CPR$\rightarrow$HDR$\rightarrow$SIR} funnel. Equipping coding CLIs with the MATLAB Agentic Toolkit raises CPR via executable feedback, with Codex improving from $0.24$ to $0.41$, yet their SIR still collapses to at most $0.02$, since passing a compiler does not ensure a valid over-the-air emission. In contrast, \sysname sustains the funnel with only gentle decay, reaching CPR $0.88$, HDR $0.85$, and SIR $0.79$. Its SIR thus exceeds the best baseline by over $20\times$, with the gap largest at the strictest over-the-air level, where the collapse is otherwise total.

\textbf{Task Level 3.} Under ambiguous requirements and stringent physical constraints (Fig.~\ref{fig:mainresult_level3}), every baseline is pinned to the floor, with SIR at most $0.02$ across all sub-benchmarks. \sysname instead retains an overall SIR of $0.67$, degrading only mildly from $0.79$ at Level 2, while its CPR and HDR stay at $0.79$ and $0.77$. This graceful degradation holds across both the MATLAB- and UHD-based paradigms and across Wi-Fi and BLE. Grounding and staged retries lift each per-stage factor, so the end-to-end product no longer collapses as tasks harden.

\begin{figure}[t]
    \centering
    \includegraphics[width=0.99\linewidth]{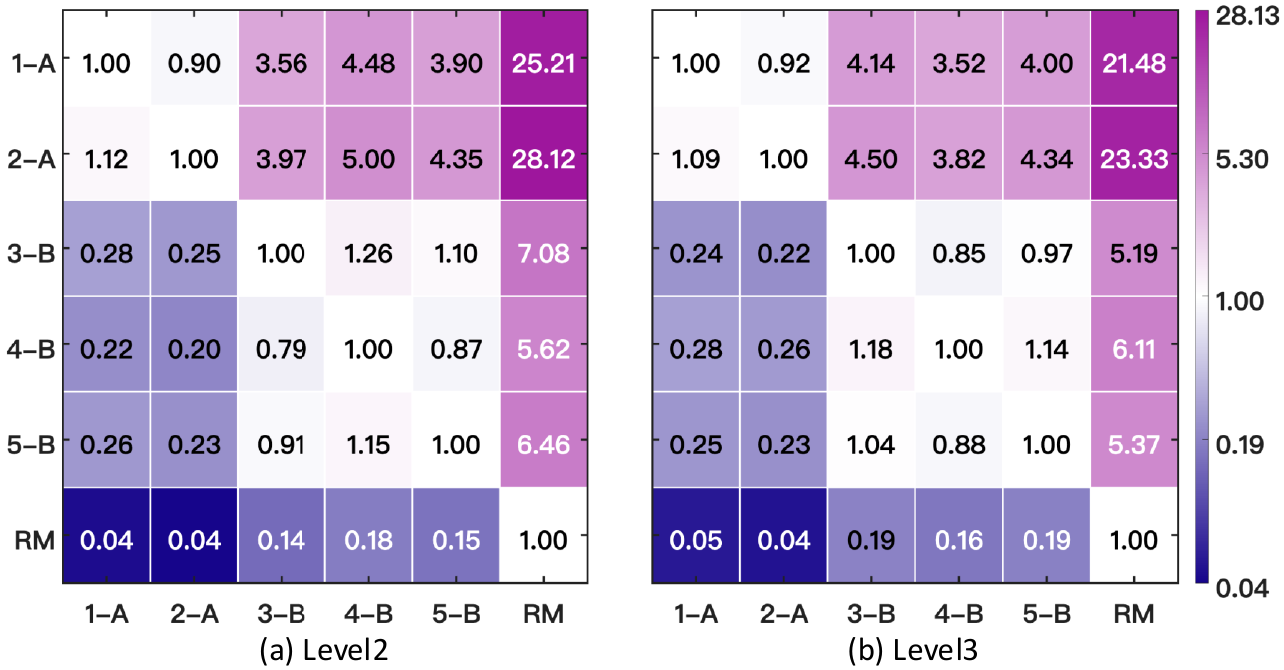}
    \caption{Human Experts vs. RadioMaster.}
    \label{fig:human_eval}
    \vspace{-5mm}
\end{figure}
\subsection{Human Experts vs. RadioMaster}\label{section:5.4}
We assess the practical value of \sysname by comparing its end-to-end configuration efficiency against human experts. We recruit five engineers with wireless and SDR backgrounds and randomly sample ten cases from each of \textbf{\textit{Task Level 2}} and \textbf{\textit{Task Level 3}}, spanning the MATLAB- and UHD-based paradigms. Two experts (Group A, 1-A and 2-A) work unaided, while three (Group B, 3-B, 4-B, and 5-B) are assisted by Codex. For each case, we log the wall-clock time from intent to a configuration whose waveform is captured and decoded by the target receiver, covering specification lookup, script authoring, debugging, and on-hardware deployment. \sysname (RM) performs the same tasks fully autonomously under the same criterion, and Fig.~\ref{fig:human_eval} reports the pairwise ratio of average configuration time.

\textbf{Results.} \sysname achieves an order-of-magnitude speedup on both levels. It configures a task up to $28\times$ faster than the unaided Group A, and, although Codex-assisted Group B is markedly faster than Group A, \sysname still outpaces it by up to $7\times$. This persistent gap shows that generic agentic coding tools alone cannot match the domain grounding, multi-agent collaboration, and closed-loop physical-layer verification of \sysname. The advantage holds from Level 2 to the harder Level 3, confirming the robustness of our pipeline. This study involved no ethical concerns.

\subsection{Ablation Studies}\label{section:5.5}
We ablate the three modules of \sysname to isolate how each counters the multiplicative collapse (Table~\ref{table: ablation}). The \emph{Direct Response} row is the monolithic single-agent baseline of Eq.~\eqref{eq:collapse}: its QAA stays high at $0.87$, yet SIR collapses to $0.01$, an empirical realization of $S_{\star}=\prod_{k}p_{k}\to0$.

Removing \textbf{RadioWiki} depresses every metric at once, lowering QAA from $0.97$ to $0.90$ and Level-2 SIR from $0.79$ to $0.45$, consistent with its role of lifting each factor from $p_{k}$ to $\tilde{p}_{k}$ via the recovery fraction $g_{k}$; once grounding is withdrawn, the upstream factors shrink and the whole product contracts. Removing \textbf{RadioAgent} keeps QAA intact at $0.97$ but sharply lowers CPR and HDR, with Level-2 HDR falling from $0.85$ to $0.59$, confirming that the detect-and-retry mechanism $\delta_{k},r_{k}$ of Theorem~\ref{thm:chain} acts on execution and deployment correctness rather than knowledge. Removing \textbf{RadioEmulator} leaves the upstream metrics almost unchanged while SIR alone collapses from $0.79$ to $0.36$, exactly the selective outcome Theorem~\ref{thm:gate} predicts, isolating the gate as the sole mechanism downstream of hardware deployment. Across the three ablations, each removed module reopens a distinct point of the multiplicative collapse so that only the full system sustains the SIR.

\begin{table}[t]
\centering
\caption{Ablation Study Results.}
\label{table: ablation}
\vspace{-3mm}
\resizebox{0.49\textwidth}{!}{
\begin{tabular}{ l | c | cc | cc | cc }
\toprule
\multirow{2}{*}{\textbf{Method}} & \multicolumn{1}{c}{\textbf{QAA}} & \multicolumn{2}{|c|}{\textbf{CPR}} & \multicolumn{2}{c|}{\textbf{HDR}} & \multicolumn{2}{c}{\textbf{SIR}} \\
& \multicolumn{1}{c}{Level 1} & \multicolumn{1}{|c}{Level 2} & \multicolumn{1}{c|}{Level 3} & \multicolumn{1}{c}{Level 2} & \multicolumn{1}{c|}{Level 3} & \multicolumn{1}{c}{Level 2} & \multicolumn{1}{c}{Level 3} \\
\midrule
Direct Response & 0.87 & 0.24 & 0.26 & 0.24 & 0.22 & 0.01 & 0.01 \\
\midrule
RadioMaster & 0.97 & 0.88 & 0.79 & 0.85 & 0.77 & 0.79 & 0.67 \\
~~w/o RadioWiki & 0.90 & 0.76 & 0.64 & 0.67 & 0.53 & 0.45 & 0.31 \\
~~w/o RadioAgent & 0.97 & 0.70 & 0.63 & 0.59 & 0.46 & 0.44 & 0.29 \\
~~w/o RadioEmulator & 0.97 & 0.80 & 0.78 & 0.78 & 0.76 & 0.36 & 0.23 \\
\bottomrule
\end{tabular}
}
\vspace{-3mm}
\end{table}

%% file: 6_Conclusion.tex
\section{Conclusion}
In this paper, we present \sysname, a fully autonomous multi-agent framework that closes the intent-to-air loop, translating high-level user intent into verified over-the-air emissions. Since the task forms a tightly chained process that collapses when any stage fails, \sysname counters this fragility with three synergistic modules: RadioWiki for knowledge grounding, RadioAgent for pipeline decomposition and local recovery, and RadioEmulator for closed-loop verification gating. We evaluate \sysname on RadioBench, the first benchmark for autonomous radio signal generation. Extensive real-world experiments show that \sysname consistently outperforms state-of-the-art baselines in configuration viability and signal fidelity, with emissions decoded by commercial receivers over the air. \sysname thus takes a concrete step toward bridging high-level reasoning and cyber-physical wireless systems, paving the way for AI-driven rapid prototyping in next-generation communications.

%% file: reference.bib
@inproceedings{AutoIoT,
  title={Autoiot: Llm-driven automated natural language programming for aiot applications},
  author={Shen, Leming and Yang, Qiang and Zheng, Yuanqing and Li, Mo},
  booktitle={Proceedings of the 31st Annual International Conference on Mobile Computing and Networking},
  pages={468--482},
  year={2025}
}

@inproceedings{IoTPilot,
  title={Programming Embedded IoT Applications in Natural Language with IoTPilot},
  author={Gong, Kaijie and Dong, Wei and Wang, Hao and Peng, Yingqi and Gao, Yi},
  booktitle={Proceedings of the 23rd Annual International Conference on Mobile Systems, Applications and Services},
  pages={70--82},
  year={2025}
}

@misc{Qwen,
title = {{Qwen3-Max}},
note = {\url{https://qwen.ai/blog?id=qwen3.7}}
}

@misc{MatlabToolkit,
title = {{Matlab Agentic Toolkit}},
note = {\url{https://www.mathworks.com/products/matlab-agentic-toolkit.html}}
}

@article{DeepSeek,
  title={Deepseek-v4: Towards highly efficient million-token context intelligence},
  author={Xu, Anyi and Lin, Bangcai and Xue, Bing and Wang, Bingxuan and Xu, Bingzheng and Wu, Bochao and Zhang, Bowei and Lin, Chaofan and Dong, Chen and Ling, Chenchen and others},
  journal={arXiv preprint arXiv:2606.19348},
  year={2026}
}

@misc{Claude,
title = {{Claude-Opus-4.6}},
note = {\url{https://platform.claude.com/docs/en/about-claude/models/whats-new-claude-4-8}}
}

@misc{GPT,
title = {{GPT-5.5}},
note = {\url{https://openai.com/zh-Hans-CN/index/introducing-gpt-5-5/}}
}

@misc{ClaudeCode,
title = {{Claude Code}},
note = {\url{https://code.claude.com/docs/en/overview}}
}

@misc{Codex,
title = {{Codex}},
note = {\url{https://openai.com/zh-Hans-CN/codex/}}
}

@misc{UHD,
title = {{Ettus USRP Hardware Driver}},
note = {\url{https://www.ettus.com/sdr-software/uhd-usrp-hardware-driver/
}}
}

@inproceedings {tinysdr,
author = {Mehrdad Hessar and Ali Najafi and Vikram Iyer and Shyamnath Gollakota},
title = {TinySDR: Low-Power SDR Platform for Over-the-Air Programmable IoT Testbeds },
booktitle = {17th USENIX Symposium on Networked Systems Design and Implementation (NSDI 20)},
year = {2020},
isbn = {978-1-939133-13-7},
address = {Santa Clara, CA},
pages = {1031--1046},
url = {https://www.usenix.org/conference/nsdi20/presentation/hessar},
publisher = {USENIX Association},
month = feb
}

@article{sdr1,
  title={Combining software-defined radio learning modules and neural networks for teaching communication systems courses},
  author={Camunas-Mesa, Luis A and de la Rosa, Jos'e M},
  journal={Information},
  volume={14},
  number={11},
  pages={599},
  year={2023},
  publisher={MDPI}
}

@ARTICLE{sdr2,
  author={Sklivanitis, George and Gannon, Adam and Batalama, Stella N. and Pados, Dimitris A.},
  journal={IEEE Communications Magazine}, 
  title={Addressing next-generation wireless challenges with commercial software-defined radio platforms}, 
  year={2016},
  volume={54},
  number={1},
  pages={59-67},
  doi={10.1109/MCOM.2016.7378427}}

@article{zou2026rf,
  title={RF-GPT: Teaching AI to See the Wireless World},
  author={Zou, Hang and Tian, Yu and Wang, Bohao and Bariah, Lina and Lasaulce, Samson and Huang, Chongwen and Debbah, M{\'e}rouane},
  journal={arXiv preprint arXiv:2602.14833},
  year={2026}
}

@article{chen2025radiollm,
  title={Radiollm: Introducing large language model into cognitive radio via hybrid prompt and token reprogrammings},
  author={Chen, Shuai and Zu, Yong and Feng, Zhixi and Yang, Shuyuan and Li, Mengchang},
  journal={arXiv preprint arXiv:2501.17888},
  year={2025}
}

@article{wang2025bridging,
  title={Bridging physical and digital worlds: embodied large AI for future wireless systems},
  author={Wang, Xinquan and Zhu, Fenghao and Yang, Zhaohui and Huang, Chongwen and Chen, Xiaoming and Zhang, Zhaoyang and Muhaidat, Sami and Debbah, M{\'e}rouane},
  journal={arXiv preprint arXiv:2506.24009},
  year={2025}
}

@article{cheng2025embodied,
  title={Embodied Intelligent Wireless (EIW): Synesthesia of Machines Empowered Wireless Communications},
  author={Cheng, Xiang and Wen, Weibo and Zhang, Haotian and Liu, Boxun and Yang, Zonghui and Zhang, Jianan and Cai, Xuesong},
  journal={arXiv preprint arXiv:2511.22845},
  year={2025}
}

@article{zhao2026agentic,
  title={An agentic system for rare disease diagnosis with traceable reasoning},
  author={Zhao, Weike and Wu, Chaoyi and Fan, Yanjie and Qiu, Pengcheng and Zhang, Xiaoman and Sun, Yuze and Zhou, Xiao and Zhang, Shuju and Peng, Yu and Wang, Yanfeng and others},
  journal={Nature},
  pages={1--10},
  year={2026},
  publisher={Nature Publishing Group UK London}
}

@inproceedings{liu2025tasksense,
  title={TaskSense: A Translation-like Approach for Tasking Heterogeneous Sensor Systems with LLMs},
  author={Liu, Kaiwei and Yang, Bufang and Xu, Lilin and Guo, Yunqi and Xing, Guoliang and Shuai, Xian and Ren, Xiaozhe and Jiang, Xin and Yan, Zhenyu},
  booktitle={Proceedings of the 23rd ACM Conference on Embedded Networked Sensor Systems},
  pages={213--225},
  year={2025}
}

@inproceedings{xu2024penetrative,
  title={Penetrative AI: Making LLMs Comprehend the Physical World},
  author={Xu, Huatao and Han, Liying and Yang, Qirui and Li, Mo and Srivastava, Mani},
  booktitle={Findings of the Association for Computational Linguistics: ACL 2024},
  pages={7324--7341},
  year={2024}
}

@inproceedings{zhao2025flexifly,
  title={FlexiFly: interfacing the physical world with foundation models empowered by reconfigurable drone systems},
  author={Zhao, Minghui and Xia, Junxi and Hou, Kaiyuan and Liu, Yanchen and Xia, Stephen and Jiang, Xiaofan},
  booktitle={Proceedings of the 23rd ACM Conference on Embedded Networked Sensor Systems},
  pages={463--476},
  year={2025}
}

@article{moler2020history,
  title={A history of MATLAB},
  author={Moler, Cleve and Little, Jack},
  journal={Proceedings of the ACM on Programming Languages},
  volume={4},
  number={HOPL},
  pages={1--67},
  year={2020},
  publisher={ACM New York, NY, USA}
}

@article{blossom2004gnu,
  title={GNU radio: tools for exploring the radio frequency spectrum},
  author={Blossom, Eric},
  journal={Linux journal},
  volume={2004},
  number={122},
  pages={4},
  year={2004},
  publisher={Belltown Media Houston, TX}
}

@inproceedings{huang2025chat3gpp,
  title={Chat3gpp: An open-source retrieval-augmented generation framework for 3gpp documents},
  author={Huang, Long and Zhao, Ming and Xiao, Limin and Zhang, Xiujun and Hu, Jungang},
  booktitle={2025 IEEE International Conference on Communications Workshops (ICC Workshops)},
  pages={492--497},
  year={2025},
  organization={IEEE}
}

@article{maatouk2024tele,
  title={Tele-llms: A series of specialized large language models for telecommunications},
  author={Maatouk, Ali and Ampudia, Kenny Chirino and Ying, Rex and Tassiulas, Leandros},
  journal={arXiv preprint arXiv:2409.05314},
  year={2024}
}

@article{thakur2024verigen,
  title={Verigen: A large language model for verilog code generation},
  author={Thakur, Shailja and Ahmad, Baleegh and Pearce, Hammond and Tan, Benjamin and Dolan-Gavitt, Brendan and Karri, Ramesh and Garg, Siddharth},
  journal={ACM Transactions on Design Automation of Electronic Systems},
  volume={29},
  number={3},
  pages={1--31},
  year={2024},
  publisher={ACM New York, NY}
}

@inproceedings{wang2024nn,
  title={NN-Defined Modulator: Reconfigurable and Portable Software Modulator on IoT Gateways},
  author={Wang, Jiazhao and Jiang, Wenchao and Liu, Ruofeng and Hu, Bin and Gao, Demin and Wang, Shuai},
  booktitle={21st USENIX Symposium on Networked Systems Design and Implementation (NSDI 24)},
  pages={775--789},
  year={2024}
}

@inproceedings{guo2024large,
  title={Large Language Model based Multi-Agents: A Survey of Progress and Challenges.},
  author={Guo, T and Chen, X and Wang, Y and Chang, R and Pei, S and Chawla, NV and Wiest, O and Zhang, X},
  booktitle={33rd International Joint Conference on Artificial Intelligence (IJCAI 2024)},
  year={2024},
  organization={IJCAI; Cornell arxiv}
}

@article{li2024survey,
  title={A survey on LLM-based multi-agent systems: workflow, infrastructure, and challenges},
  author={Li, Xinyi and Wang, Sai and Zeng, Siqi and Wu, Yu and Yang, Yi},
  journal={Vicinagearth},
  volume={1},
  number={1},
  pages={9},
  year={2024},
  publisher={Springer}
}

@inproceedings{yao2022react,
  title={React: Synergizing reasoning and acting in language models},
  author={Yao, Shunyu and Zhao, Jeffrey and Yu, Dian and Du, Nan and Shafran, Izhak and Narasimhan, Karthik R and Cao, Yuan},
  booktitle={The eleventh international conference on learning representations},
  year={2022}
}

@article{xu2023rewoo,
  title={Rewoo: Decoupling reasoning from observations for efficient augmented language models},
  author={Xu, Binfeng and Peng, Zhiyuan and Lei, Bowen and Mukherjee, Subhabrata and Liu, Yuchen and Xu, Dongkuan},
  journal={arXiv preprint arXiv:2305.18323},
  year={2023}
}

@article{guo2026embodied,
  title={Embodied llm agents learn to cooperate in organized teams},
  author={Guo, Xudong and Huang, Kaixuan and Liu, Jiale and Fan, Wenhui and V{\'e}lez, Natalia and Wu, Qingyun and Wang, Huazheng and Griffiths, Thomas L and Wang, Mengdi},
  journal={IEEE Transactions on Computational Social Systems},
  year={2026},
  publisher={IEEE}
}

@inproceedings{tan2020multi,
  title={Multi-agent embodied question answering in interactive environments},
  author={Tan, Sinan and Xiang, Weilai and Liu, Huaping and Guo, Di and Sun, Fuchun},
  booktitle={European Conference on Computer Vision},
  pages={663--678},
  year={2020},
  organization={Springer}
}

@article{wu2025generative,
  title={Generative multi-agent collaboration in embodied ai: A systematic review},
  author={Wu, Di and Wei, Xian and Chen, Guang and Shen, Hao and Wang, Xiangfeng and Li, Wenhao and Jin, Bo},
  journal={arXiv preprint arXiv:2502.11518},
  year={2025}
}

@article{ghafarollahi2025sciagents,
  title={SciAgents: automating scientific discovery through bioinspired multi-agent intelligent graph reasoning},
  author={Ghafarollahi, Alireza and Buehler, Markus J},
  journal={Advanced Materials},
  volume={37},
  number={22},
  pages={2413523},
  year={2025},
  publisher={Wiley Online Library}
}

@inproceedings{fan2025ai,
  title={Ai hospital: Benchmarking large language models in a multi-agent medical interaction simulator},
  author={Fan, Zhihao and Wei, Lai and Tang, Jialong and Chen, Wei and Siyuan, Wang and Wei, Zhongyu and Huang, Fei},
  booktitle={Proceedings of the 31st International Conference on Computational Linguistics},
  pages={10183--10213},
  year={2025}
}

@inproceedings{peng2025tree,
  title={Tree-of-Reasoning: Towards Complex Medical Diagnosis via Multi-Agent Reasoning with Evidence Tree},
  author={Peng, Qi and Cui, Jialin and Xie, Jiayuan and Cai, Yi and Li, Qing},
  booktitle={Proceedings of the 33rd ACM International Conference on Multimedia},
  pages={1744--1753},
  year={2025}
}

@misc{tong2026wirelessagentautomatedagenticworkflow,
      title={WirelessAgent++: Automated Agentic Workflow Design and Benchmarking for Wireless Networks}, 
      author={Jingwen Tong and Zijian Li and Fang Liu and Wei Guo and Jun Zhang},
      year={2026},
      eprint={2603.00501},
      archivePrefix={arXiv},
      primaryClass={cs.NI},
      url={https://arxiv.org/abs/2603.00501}, 
}

@inproceedings{ma2026autorf,
  title={AutoRF: Towards an Agentic Framework for Automated RF Hardware Design},
  author={Ma, Ruichun and Qiu, Lili and Hu, Wenjun and Wang, Jiazhao and Song, Yiwen and Pan, Hao},
  booktitle={Proceedings of the 24th Annual International Conference on Mobile Systems, Applications and Services},
  pages={279--293},
  year={2026}
}

@inproceedings{luo2026emerging,
  title={IoTGen: Towards LLM-driven IoT Hardware Generation},
  author={Luo, Qinpei and Ma, Ruichun and Zhang, Xinyu and Qiu, Lili},
  booktitle={Proceedings of the 24th Annual International Conference on Mobile Systems, Applications and Services},
  pages={792--807},
  year={2026}
}

@inproceedings{yang2026autoembed,
  title={AutoEmbed: LLM-driven Automated Software Development for Generic Embedded IoT Systems},
  author={Yang, Huanqi and Li, Mingzhe and Han, Mingda and Li, Zhenjiang and Xu, Weitao},
  booktitle={Proceedings of the 2026 ACM/IEEE International Conference on Embedded Artificial Intelligence and Sensing Systems},
  pages={362--376},
  year={2026}
}

@inproceedings{wang2026ppai,
  title={PPAI: Enabling Personalized LLM Agent Interoperability for Collaborative Edge Intelligence},
  author={Wang, Zile and Liu, Qianli and Guo, Kaibin and Wang, Haodong and Lin, Jian and Hong, Zicong and Guo, Song},
  booktitle={IEEE INFOCOM 2026-IEEE Conference on Computer Communications},
  pages={1--10},
  year={2026},
  organization={IEEE}
}

@inproceedings{kwon2026open,
  title={Open RAN Conflict Agents: Detecting and Mitigating xApp Conflicts with Generative Agents},
  author={Kwon, Dae Cheol and Zhang, Xinyu},
  booktitle={IEEE INFOCOM 2026-IEEE Conference on Computer Communications},
  pages={1--10},
  year={2026},
  organization={IEEE}
}

@misc{lei2026enablingagileambientiot,
      title={Enabling Agile Ambient IoT Networking via a Parameterized Hybrid Radio}, 
      author={Jiazhen Lei and Fengyuan Zhu and Tianze Cao and Yuxin Sha and Linling Zhong and Wenhui Li and Bingbing Wang and Zeming Yang and Jinyang Sun and Yibin Deng and Xiaohua Tian},
      year={2026},
      eprint={2605.18314},
      archivePrefix={arXiv},
      primaryClass={cs.NI},
      url={https://arxiv.org/abs/2605.18314}, 
}
